\newcommand{\F}{\mathbb{F}}
\newcommand{\Plogic}{\mathbb{P}}
\newcommand{\KDf}{\mathbb{KD}4}
\newcommand{\KDff}{\mathbb{KD}45}
\newcommand{\Sf}{\mathbb{S}5}
\newcommand{\M}{\mathcal{M}}
\newcommand{\ForCL}{\mathcal{L}_{CL}}
\newcommand{\ForT}{\mathcal{L}_T}
\newcommand{\supp}{\rightsquigarrow}
\definecolor{rosa}{rgb}{1.0, 0.41, 0.71}
\newcommand{\NonMonot}{|\!\!\!\sim}
\def\IFF{\mbox{$\leftrightarrow$}}
\newtheorem{fact}{Fact}
\newcommand{\logic}{SBTrust}
\newtheorem{example}{Example}
\newtheorem{remark}{Remark}
\newtheorem{definition}{Definition}
\newtheorem{lemma}{Lemma}
\newtheorem{theorem}{Theorem}
\newtheorem{corollary}{Corollary}
\title{Support + Belief = Decision Trust} 
\author[1]{Alessandro Aldini}
\author[2]{Agata Ciabattoni}
\author[2]{Dominik Pichler}
\author[1]{Mirko Tagliaferri\footnote{Corresponding author. Email: mirko.tagliaferri@uniurb.it}}
\affil[1]{University of Urbino Carlo Bo, Italy}
\affil[2]{TU Wien, Austria}
\date{}
\begin{document}

\maketitle
\begin{abstract}
We present \logic, a logical framework designed to formalize decision trust. Our logic integrates a doxastic modality with a novel non-monotonic conditional operator that establishes a positive support relation between statements, and is closely related to a known dyadic deontic modality. For \logic, we provide semantics, proof theory and complexity results, as well as
motivating examples.
Compared to existing approaches, our framework seamlessly accommodates the integration of multiple factors in the emergence of trust.
 
\end{abstract}

\section{Introduction}\label{sec:intro}

\textit{Decision trust} is defined as the willingness to depend on something (or somebody) with a feeling of relative security, although negative consequences are possible~\cite{JOSANG2007618,McKnight2000THEMO}.
This notion plays a central role in 
computer-mediated interactions.
%
%
%
%
For instance, in e-commerce, when there is an abundance of vendors in a marketplace offering nearly identical products, customers use trust to decide whom to buy from~\cite{Soleimani2022-fb}. Similarly,  in the next generation Internet of Things, smart sensors, edge computing nodes, and cloud computing data centers rely on trust to share services such as data routing and analytics~\cite{10.1145/3558779}. In spite of their differences, in both scenarios, interactions are governed by trust evaluations that depend on various conditions, e.g., security-based policies, reputation scores, Quality of Service (QoS), and the trustee's (avail)ability to behave as expected by the trustor.

Those facts drove the development of various formal models for assessing trust, 
see, e.g., \cite{TagliaferriOslo,10.1145/3236008,10.1007/978-3-540-74810-6_8}. Yet, each existing model relies on specific conditions for the emergence of trust.
The conditions are specifically selected depending on the Trust model in use and then applied to a given 
domain as fundamental requirements enabling trust.
However, this specialized approach fails to work in environments
where many 
conditions contribute to the emergence of trust, see, e.g., the Forbes report~\cite{Forbes}.
This calls for Trust models that can express multifaceted information combined to evaluate the presence or lack of trust in the environment~\cite{10.1093/logcom/exac016}.
To address this need, we introduce \logic, a logic that allows reasoning about decision trust relying on varied enabling conditions.
In our logic, trusting a formula $\varphi$ means that the trustor is willing to accept the formula as being true, although it might be false. This acceptance-based interpretation of trust 
is compatible with influential conceptual analyses of the notion of trust that show that trusting a proposition boils down to using the proposition as a premise in one's reasoning, even though the proposition might be false~\cite{FrostArnold2014}.

Concretely, in \logic, Trust is a derived operator whose constituents are a Support connective and a belief operator
(hence the logic's name). Whenever it is both believed that a formula $\varphi$ supports a formula $\psi$, and that $\varphi$ is true, then $\psi$ is $\varphi$-trusted.
%
The notion of support, establishing a form of positive influence between two statements, is modeled through a novel dyadic 
operator $\supp$, where
$\varphi \supp \psi$ is read as: in the most likely $\varphi$-scenarios $\psi$ holds.
The operator $\supp$ yields a non-monotonic conditional sharing properties with the KLM logic $\Plogic$ of preferential reasoning~\cite{KLM}, but cautious monotonicity. Furthermore, it encompasses additional properties.
We characterize $\supp$ with semantics and proof theory; its axioms and rules turn out to axiomatize 
the flat (i.e., non-nested) fragment of \AA qvist system $\F$~\cite{A84} -- a foundational preference-based logic for normative reasoning.
The notion of belief (what is considered to be true from a subjective standpoint) is modeled through the classical belief operator $B$, obtained through the normal modal logic $\KDf$.
Hence, our Trust operator ($T_\varphi \psi$) is built using those ingredients - $(B(\varphi) \land B(\varphi \supp \psi)) \rightarrow T_\varphi(\psi)$.

In the following, through a comparison with the literature, we provide motivations for introducing yet a new logical framework for decision trust. The key ingredient is the support operator $\supp$, for which we discuss in Sect.~\ref{sect:supp} the (undesired and) required properties.
For our logic we present syntax (Sect.~\ref{sect:syn}), semantics (Sect.~\ref{sect:sem}), and establish the connection between $\supp$ and \AA qvist system $\F$. Soundness, completeness, and complexity (for both the satisfiability and the model checking problem) for \logic\ are established in Sect.~\ref{sect:res}.


\subsection{Decision trust: state of the art}
\label{sect:art}

Logical formalisms for decision trust can be classified into one of the following three paradigms~\cite{ARTZ200758}:

\begin{itemize}
    \item \textbf{Policy-based models}: trust is obtained by implementing hard-security mechanisms based on cryptographic protocols and access control, see, e.g., \cite{PolicyTrust}. Logical frameworks for policy-based mechanisms are defined 
    in, e.g., \cite{Abadi2009,10.1007/978-3-642-28641-4_2}.
    \item \textbf{Reputation-based models}: trust is obtained through indications of past interactions that are evaluated by gathering and manipulating performance scores for those interactions, see, e.g., \cite{9268769}. Logical approaches in this setting are, e.g., \cite{ALDINI2024109167,LIU2011547,Pinyol2012}.
    \item \textbf{Cognitive models}: trust derives from the combination of various complex factors,
    including the agent's disposition and the importance/utility of a situation~\cite{Marsh1994FormalisingTA}, or the agent's expectation and willingness~\cite{CastelfranchiFalcone2010}; several logics 
    formalize such cognitive aspects~\cite{10.1007/978-3-030-39749-4_1,10.1093/jigpal/jzp077,10.5555/3237383.3237415,DBLP:conf/prima/LiuL17,Primiero2020-PRIALO-3}.
\end{itemize}


Although models that fall within one given paradigm 
are employed in real-world applications, 
see, e.g., \cite{kamvar2003eigentrust}, they tend to rely on partial features of trust or assume extremely specific conditions, thus are limited. 
Policy-based models flatten trust on the use of (cryptographic) protocols and regulations
that fail whenever they circularly rely on some trust conditions - \textit{the problem of trusting the policy-makers}~\cite{10.1007/978-3-662-43813-8_1}. 
Reputation-based models flatten trust on scores that often represent only a proxy for trust - \textit{the problem of the insufficiency of reputation for trust}~\cite{10.1145/3236008}. 
Differently from the other paradigms, cognitive models of trust can combine various ingredients that reflect agents' cognitive states, thus capturing a more nuanced notion of trust. However, those models rely on specific definitions of trust taken from cognitive science (see, e.g., the logical model of~\cite{10.1093/jigpal/jzp077}, which is inspired by the cognitive theory of trust studied in~\cite{CastelfranchiFalcone2010}) and specify necessary conditions for trust emergence. This creates a trade-off between the effectiveness in modeling various aspects of trust and the complexity in estimating all its constituting elements in real-world  environments.
 The following example better clarifies the difference between the various paradigms.


\begin{example}\label{ex:amazon}
As a leading global online retailer, Amazon prioritizes building consumer trust to drive transactions on their platform. To this end, the company enforces various protocols and vendor rules.
Imagine a customer assessing whether to trust the proposition ``Amazon vendor $V_i$ is reliable'' ($\boxed{\mathit{GoodV_i}}$).
%
In a {\em policy-based model} of trust, the customer would only be able to trust $\mathit{GoodV_i}$ under the conditions that $V_i$ successfully fulfills Amazon's internal policies (e.g., $V_i$ is a registered company). 
Yet, this approach has various drawbacks: $(i)$ the requirements might be tricked, giving the customer a false sense of security and exposing her to scams; $(ii)$ building trust goes beyond regulations, and customers' trust seldom rely only on vendors abiding to legal and technical policies; $(iii)$ the problem of trust computing would only be shifted from the vendor $V_i$ to Amazon and the policies enforced by it.
%
In a {\em reputation-based model}, trust in $\mathit{GoodV_i}$ can only depend on $V_i$'s positive reviews. However, this has two limitations: $(i)$ new vendors lack reviews, hindering their ability to establish trust; $(ii)$ reviews can be manipulated, leading to inaccurate trust assessments (e.g., in 2017, \textit{The Shed at Dulwich} restaurant became London's number one restaurant on Tripadvisor, although serving fake food).
%
Using a {\em specific cognitive model} of trust, it would be possible to compute trust estimations based on cognitive features of the agents involved (e.g., the intention of the vendor to provide a good service). However, by having to choose a specific cognitive model, the features that can be modelled as trust triggers would be limited to the ones indicated by the model itself. Moreover, cognitive models often neglect to include features typical of the other paradigms, i.e., policies and reputation.
\end{example}

\section{Support operator}\label{sect:supp}

We introduce the support operator $\supp$. We compare our modeling approach with other approaches used in the literature on non-monotonic reasoning and motivate
the axiomatization we have chosen for our operator.
Henceforth, we will shorten the reading of $\varphi \supp \psi$ to: given $\varphi$, then $\psi$ is most likely.






\subsection{Why yet another notion of support}

Various notions of support and axiomatizations 
as a conditional operator have been introduced in the literature; see, e.g.~\cite{CrupiIacona2022b} for three potential readings of it
as an evidence operator, or~\cite{3e750f7c-5e82-3f6d-a063-eb4c91160c5f} for a thorough discussion on a dyadic operator for relevance. What all authors agree upon is that the operator should be non-monotonic, i.e., given $\varphi \supp \psi$, there is no reason why $\varphi \wedge \xi \supp \psi$ should be the case. This is because additional information ($\xi$) may undermine 
the previously established supporting statement.
We also assume that support is a non-monotonic operator. 
However, we make assumptions that distinguish our view from the existing ones. 
Specifically, contraposition ($(\phi \supp \psi) \rightarrow (\neg\psi \supp \neg\phi)$) and right weakening (if $\phi\models\psi$, then $\chi\supp\phi\models\chi\supp\psi$) together give monotonicity for the $\supp$ operator.
Differently from~\cite{CrupiIacona2022a}, to avoid monotonicity, we give up contraposition (as motivated through Example~\ref{ExampleCMP}) rather than right weakening, which is a reasonable assumption (see Sect.~\ref{sec:axiomatizationsupport}).

\begin{example}[Contraposition and Modus Ponens]
\label{ExampleCMP}
Assume that $\mathit{GoodV_i}$ supports that $V_i$'s products are delivered fast ($\boxed{\mathit{FastV}_i}$), i.e.,
$\mathit{GoodV_i} \supp \mathit{FastV}_i$. This should not imply that if the delivery is slow, then it is most likely that the vendor is not a good one ($\neg \mathit{FastV}_i \supp \neg\mathit{GoodV_i}$), as the delay may depend on other reasons.
For analogous reasons,
we do not have that $\mathit{GoodV_i}$ and $\mathit{GoodV_i} \supp \mathit{FastV}_i$ imply $\mathit{FastV}_i$, meaning that Modus Ponens for $\supp$ does not hold. 
\end{example}

To proceed methodically, we draw upon the axiomatizations of non-monotonic conditionals introduced in~\cite{KLM}, commonly referred to as KLM systems, as they serve as cornerstones for non-monotonic reasoning.
We start by illustrating with an example why the KLM principle of cautious monotonicity $\mathbf{CM}$ is unsuitable for formalizing our concept of support (see also Remark~\ref{CMrejection}):
$$\mathbf{CM}  \quad ((\varphi \supp \psi) \land (\varphi \supp \chi)) \rightarrow ((\varphi \land \psi) \supp \chi)$$

\begin{example}[{\bf CM}]\label{Example:CM}
Let $\boxed{\mathit{Def}CV_i}$ denote that customer $C$ receives a defective item from vendor $V_i$, and $\boxed{\mathit{Ref}CV_i}$ that $C$ is refunded by $V_i$. In the Amazon marketplace, we have that $\mathit{Def}CV_i \supp \mathit{Ref}CV_i$. 
We also have that given $\mathit{Def}CV_i$ then it is most likely that $V_i$ is not reliable ($\mathit{Def}CV_i \supp \neg\mathit{Good}V_i$). However, having $\mathit{Def}CV_i \land \mathit{Ref}CV_i$ does NOT mean that $\neg\mathit{Good}V_i$ is most likely.
In fact, receiving a refund eases the customer into considering the vendor a good one, and this invalidates {\bf CM}. Example~\ref{Example:semantics} will show the failure of this inference in our logic.
\end{example}



\subsection{Intuitive properties}  
\label{sec:axiomatizationsupport} 
We introduce the properties
that we envision for the concept of support, illustrating their rationale  refining the scenario outlined in Example~\ref{ex:amazon}. 
As will be shown in Theorem~\ref{teo:Minimal_System}, many of the properties discussed below are inter-derivable, leading to a more concise axiomatization for $\supp$.

Henceforth, by axioms, we mean axiom schemata. 
The naming conventions for the considered properties are taken from the KLM systems~\cite{KLM} and $\F$~\cite{A84,Xav2015}. 
 
 

As the support operator $\supp$ applies to boolean formulas we expect all classical tautologies to be provable.
Moreover, since any fact intuitively supports itself, the axiomatization of $\supp$ should contain the following axiom: 
$$\mathbf{ID}: \varphi\supp\varphi$$
The presence of this axiom highlights that $\supp$ does not establish a causal relation, see Remark~\ref{rem:connections}.
%
Moreover we want our support system to not support contradictions.
In essence, anything supporting a contradiction must be dismissed. This principle is reflected in the following axiom:
%
$$\mathbf{ST}: (\varphi \supp \bot) \to \neg\varphi $$

\begin{example}
Let $\boxed{\mathit{Compliant}V_i}$ mean that vendor $V_i$ is compliant with \textit{``Amazon Seller Terms and Conditions''} and let $V_i$ be a vendor with an average rating of 4.5 stars for her main product $j$ ($\boxed{\mathit{TopRating}V_{i,j}}$).
Assume that $\mathit{Compliant}V_i$ and $\mathit{TopRating}V_{i,j}$ support that $V_i$ is a good vendor, i.e., $(\mathit{Compliant}V_i\land\mathit{TopRating}V_{i,j})\supp \mathit{Good}V_i$. This implies that being compliant supports the connection between having good reviews and being a good vendor, as expected by Amazon and their customers, i.e., $\mathit{Compliant}V_i \supp (\mathit{TopRating}V_{i,j} \rightarrow \mathit{Good}V_i)$.
\end{example}

This example leads to the following axiom (first introduced as a rule in~\cite{Shoham}):
$$\mathbf{SH}: ((\varphi \land \psi) \supp \chi) \rightarrow (\varphi \supp (\psi \rightarrow \chi))$$
that expresses the fact that deductions performed under strong
assumptions may be useful even if the assumptions are not known facts.

It is quite natural to assume that if a statement supports two other statements, it supports their conjunction, as expressed by the axiom below:
$$\mathbf{AND}:
((\varphi\supp \psi)\land(\varphi\supp \chi))\rightarrow (\varphi\supp(\psi \land\chi))$$



Note that due to \textbf{ST}, this axiom can never be used to derive that a non-contradictory statement supports a contradiction. We illustrate this with the following example, involving the lottery paradox~\cite{kyburg1961probability}, a stumbling block for default reasoning systems (see~\cite{Poole89}).


\begin{example}
The paradox states that in a fair lottery, it is rational to assume that each individual ticket is likely not to win. By allowing to infer from two statements being likely that their conjunction is also likely, one concludes that two tickets are likely not to win. By iterating this reasoning, we can infer that it is likely that no ticket will win, which contradicts the fact that a winning ticket exists. 
The paradox does not apply to $\supp$. Indeed, if we assume that every ticket is most likely not to win, $\top \supp \neg T_i$, we can infer $\top \supp \bigwedge \neg T_i$ by \textbf{AND}. 
Being $\bigwedge \neg T_i$ a contradiction, using \textbf{ST} we could derive $\neg \top$, which is impossible.
  
  

\end{example}


A support operator should also satisfy the $\mathbf{CUT}$ axiom, as illustrated by Example~\ref{CUTexample} below

    $$\mathbf{CUT}: ((\varphi \supp \psi) \land ((\varphi\land\psi)\supp\chi)) \rightarrow (\varphi\supp\chi)$$
    
\begin{example}
\label{CUTexample}
Let $\boxed{\mathit{Auth}V_i}$ stand for  $V_i$ is authenticated on the Amazon marketplace. 
Obviously, 
an authenticated and compliant vendor is  most likely to be a legitimate business ($\boxed{\mathit{Legit}V_i}$), i.e.,
$(\mathit{Auth}V_i \land \mathit{Compliant}V_i) \supp \mathit{Legit}V_i$. Moreover, due to Amazon's policies, $\mathit{Auth}V_i \supp \mathit{Compliant}V_i$. This implies that given $\mathit{Auth}V_i$ it is already most likely that $V_i$ is legitimate, $\mathit{Auth}V_i \supp \mathit{Legit}V_i$.
\end{example}

\begin{example}
Let $\boxed{\mathit{Fair}V_i}$ mean that 
$V_i$ abides to the \textit{``Acting Fairly''} policy of the ``Amazon's Code of Conduct''.
Consider the case in which we have both 
$\mathit{Compliant}V_i \supp \mathit{Good}V_i$
and $\mathit{Fair}V_i \supp \mathit{Good}V_i$.
These two facts imply that it should be sufficient to satisfy $\mathit{Compliant}V_i$ or  
$\mathit{Fair}V_i$ to be considered a good vendor, i.e., $(\mathit{Compliant}V_i\lor\mathit{Fair}V_i)\supp\mathit{Good}V_i$. 
\end{example}

This example leads to the axiom:
$$\mathbf{OR}:
((\varphi \supp \psi)\land (\chi\supp\psi))\rightarrow ((\varphi \lor\chi)\supp\psi)$$

\begin{example}
Let $\boxed{\mathit{GoodQoS}V_{i,j}}$ mean that vendor $V_i$ offers high QoS while selling product $j$. Consider a situation in which $V_i$ sells two distinct products $a$ and $b$, using the same commercial infrastructure (same logistics, same customer care, and so on). Hence, it would be absurd that $V_i$ offers high QoS only for one of the two products, i.e., $\neg(\mathit{GoodQoS}V_{i,a} \leftrightarrow \mathit{GoodQoS}V_{i,b})\supp\bot$.
Hence whatever $\mathit{GoodQoS}V_{i,a}$ supports, it should also be supported by $\mathit{GoodQoS}V_{i,b}$, and viceversa.
\end{example}

This example leads to the following axiom:
$$\textbf{LL+}: (\neg (\varphi \IFF \psi) \supp \bot) \rightarrow ((\varphi \supp \chi) \leftrightarrow (\psi \supp \chi))$$



The first rule we consider is motivated by the example:
%
%

\begin{example}
Let $\boxed{\mathit{GoodPrice}V_i}$ mean that $V_i$ prices well her products and $\boxed{\mathit{AmazonChoice}_{i,j}}$ that the product $j$ sold by $V_i$ is labeled as an ``Amazon's Choice'' product.
By Amazon's policy, $(\mathit{GoodPriceV_i} \land \mathit{Fast}V_i \land \mathit{TopRating}V_{i,j}) \rightarrow \mathit{AmazonChoice}_{i,j}$.
Now, assume 
$\mathit{Good}V_i \supp \mathit{GoodPrice}V_i$ (a good vendor is most likely to price well her products),
$\mathit{Good}V_i \supp \mathit{Fast}V_i$ (a good vendor is  most likely to deliver her products fast), and $\mathit{Good}V_i \supp \mathit{TopRating}V_{i,j}$ (a good vendor is  most likely to have good reviews). Hence, all of these supports together should imply that $\mathit{Good}V_i$ supports $\mathit{AmazonChoice}_{i,j}$. 
\end{example}

The resulting rule is:
$$ \mathbf{RCK}: \frac{(\varphi_1\land\dots\land\varphi_n)\rightarrow \varphi_{n+1}}{((\psi\supp\varphi_1)\land\dots\land(\psi\supp\varphi_n))\rightarrow (\psi\supp\varphi_{n+1})}$$
of which  Right Weakening (\textbf{RW}),  see Remark~\ref{rem:connections},
represents the particular case $n=1$.

\begin{example}
Let $V_i$ be a vendor selling product $j$ and assume that  $\mathit{GoodQoS}V_{i,j} \supp \mathit{Fast}V_i$ and $\neg(\mathit{AmazonChoice}_{i,j} \supp \mathit{Def}CV_i)$. If these conditions hold, then
customer $C$ will have a good purchasing experience. 
Given that
this implication holds, under the same hypothesis, it follows that $C$ having a negative purchasing experience supports a contradiction. 
\end{example}

This example leads to the following ``$\Sf$-like" rule:
$$\mathbf{S5_F}: \frac{((\neg)(\varphi_1 \supp \psi_1) \land \dots \land (\neg)(\varphi_n \supp \psi_n))\rightarrow \chi} {((\neg)(\varphi_1 \supp \psi_1) \land \dots \land (\neg)(\varphi_n \supp \psi_n))\rightarrow  (\neg \chi \supp \bot)}$$
where $(\neg) (\varphi_i \supp \psi_i)$ stands for either $(\varphi_i \supp \psi_i)$ or
its negated version $\neg (\varphi_i \supp \psi_i)$.
The rule is named because, when considered alongside other axioms and rules, $\mathbf{S5_F}$ grants the operator $\supp$ all the properties of an $\Sf$-modality 
for the shallow fragment (see Theorem~\ref{teo:flat-fragmentF}). As shown in Sect.~\ref{sect:res}, $\mathbf{S5_F}$ lets the operator $\supp$ behave locally like an absolute operator, playing a crucial role in the completeness proof.





\begin{remark}
\label{rem:connections} 
(Most of) The axioms and rules discussed above are present in well-known systems. 
For instance, the KLM logic $\Plogic$ of preferential reasoning, which interprets the dyadic operator $\varphi~\NonMonot\psi$ as
``$\varphi$ \textit{typically implies} $\psi$'', contains
the rule {\bf RW} (see below) and the axioms {\bf ID}, {\bf CUT}, {\bf AND} and {\bf OR}. 
I/O logics~\cite{MT00} and their causal versions~\cite{causal_logic}, whose dyadic operator is interpreted as a dyadic obligation and a causal relation, respectively, share {\bf RW}, {\bf CUT}, {\bf AND} and {\bf OR} (but notably not {\bf ID}).
Note that KLM and (deontic and causal) I/O logics  also contain the rule 
$\mathbf{LLE}$ below right: 
$$\mathbf{RW}: \frac{\varphi_1\rightarrow\varphi_2}{(\psi\supp\varphi_1)\rightarrow (\psi\supp\varphi_2)}
\qquad \qquad
\mathbf{LLE}:
\frac{\varphi\leftrightarrow\psi}{(\varphi\supp\chi)\leftrightarrow(\psi\supp\chi)}$$
which is weaker than the axiom $\mathbf{LL+}$.
An important difference between these logics and our $\supp$ operator is the direct interaction of support formulas and propositional formulas due to the rule $\mathbf{S5_F}$.

\end{remark}

\section{A logical framework for decision trust}\label{sect:syn}


We introduce the logic \logic\ for reasoning about decision trust. \logic\ is obtained by combining the $\supp$ operator with a belief operator $B$. For the former, we
use a (subset of) the discussed axioms and rules and for the latter a $\KDf$ modality\footnote{Alternatively one could  use a $\KDff$ modality, which, however, would include as a side effect negative introspection;
see~\cite{Humberstone2016-HUMPAO-2} for a discussion of why it might be undesirable in scenarios similar to the ones we discuss.}

\subsection{Syntax and axiomatization}\label{sec:axiom}
The language $\mathcal{L}$ of \logic\ consists of a countable set of propositional variables (ranging over $p, q, \dots$), the connectives $\wedge$, $\lor$, $\to$, $\leftrightarrow$ and $\neg$ of classical logic, the binary support operator $\supp$, and the unary belief operator $B$. $\mathcal{L}$ is defined by the following two layers grammar ($*\in\{\land, \lor, \to, \leftrightarrow\}$):
\[\begin{array}{rcl}
\varphi & := & \bot \mid p \mid \varphi * \varphi \mid \neg \varphi  \\
\alpha & := & \varphi \mid \varphi\supp\varphi \mid  B(\alpha) \mid \alpha * \alpha \mid \neg \alpha
\end{array}\]
%
%
%
We use $\varphi, \psi, \chi, \delta$, and $\pi$ only for propositional logic formulas, and $\alpha$ and $\beta$ for general formulas in $\mathcal{L}$. $\ForT$ and $\ForCL$ will denote the set of formulas of \logic\ and of classical propositional logic, respectively.
We identify theoremhood in \logic\ with derivability in its Hilbert-style system.

\begin{definition}
\label{def:SBTTrust}    
\logic\ is obtained by extending any axiom system for propositional classical logic: (indicating its axioms by) ($\mathbf{CL}$) and the Modus Ponens rule {\bf MP},  together with:

\begin{description}
\item[For the support operator:]
The axiom schemata
%
\[
\begin{array}{ll}
(\mathbf{ID}) & \varphi \supp \varphi \\
(\mathbf{ST}) & (\varphi \supp \bot) \rightarrow \neg \varphi \\
(\mathbf{SH}) & ((\psi \land \chi) \supp \varphi) \rightarrow (\psi \supp (\chi \rightarrow \varphi)) \\
(\mathbf{LL+}) & (\neg (\varphi \IFF \psi) \supp \bot) \rightarrow ((\varphi \supp \chi) \leftrightarrow (\psi \supp \chi))
\end{array}
\]
together with the rules $\mathbf{RCK}$ and $\mathbf{S5_F}$ (whose applications must ensure the resulting formulas to be within the language ${\mathcal{L}}$).

\item[For the belief operator:]

The following axiom schemata
\[\begin{array}{ll}
(\mathbf{KB}) & B(\alpha \rightarrow \beta) \rightarrow (B(\alpha) \rightarrow B(\beta)) \\
(\mathbf{DB}) & B(\alpha) \rightarrow \neg B(\neg \alpha) \\
(\mathbf{4B}) & B(\alpha) \rightarrow B (B( \alpha)) \\
\end{array}\]
and the Necessitation rule for $B$ ($\bf NB$).

%
%
\end{description}
 
\end{definition}


Trust arises as a combination of support and belief. 

\begin{definition}\label{def:trust}
    $T_\varphi \psi:= B(\varphi) \land B(\varphi \supp \psi)$.
\end{definition}




The notion of derivation is the usual one (some care is required to maintain the restrictions on our language), as well as the 
notion of derivability for a formula $\alpha$ from a set of assumptions $\Phi$, which we denote as $\Phi \vdash \alpha$. 
We write  $\vdash \alpha$ iff $\emptyset \vdash \alpha$. Clearly, in \logic, the deduction theorem holds. 
We now prove that all the axioms and rules stated in Sect.~\ref{sec:axiomatizationsupport} are derivable in \logic.



\begin{theorem}
\label{teo:Minimal_System}
The rules {\bf RW} and {\bf LLE}, as well as the axioms {\bf AND}, {\bf CUT}, and {\bf OR} are derivable in the system for $\supp$.
\end{theorem}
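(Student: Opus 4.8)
The plan is to derive each of the five principles from the primitive ingredients $\mathbf{ID}$, $\mathbf{SH}$, $\textbf{LL+}$ and the rule $\textbf{RCK}$, in an order that lets the later derivations reuse the earlier ones. The two rules come essentially for free: $\textbf{RW}$ is exactly the instance $n=1$ of $\textbf{RCK}$, and $\textbf{AND}$ is the instance $n=2$ applied to the propositional tautology $(\psi\land\chi)\to(\psi\land\chi)$, which immediately yields $((\varphi\supp\psi)\land(\varphi\supp\chi))\to(\varphi\supp(\psi\land\chi))$. I would establish these first, since $\textbf{RW}$ and $\textbf{AND}$ are then available throughout.

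For $\textbf{LLE}$ the key observation is that its hypothesis $\vdash\varphi\IFF\psi$ makes $\neg(\varphi\IFF\psi)$ propositionally inconsistent, so $\vdash\neg(\varphi\IFF\psi)\to\bot$. Starting from the $\mathbf{ID}$-instance $\neg(\varphi\IFF\psi)\supp\neg(\varphi\IFF\psi)$ and applying $\textbf{RW}$ along this implication gives $\neg(\varphi\IFF\psi)\supp\bot$; feeding this into $\textbf{LL+}$ and using $\textbf{MP}$ discharges the antecedent and delivers $(\varphi\supp\chi)\IFF(\psi\supp\chi)$. Thus $\textbf{LLE}$ is simply the "unconditional" form of $\textbf{LL+}$. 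Next, $\textbf{CUT}$ follows by the standard export-and-recombine pattern: from $(\varphi\land\psi)\supp\chi$, axiom $\mathbf{SH}$ gives $\varphi\supp(\psi\to\chi)$; combining this with the hypothesis $\varphi\supp\psi$ via $\textbf{AND}$ yields $\varphi\supp(\psi\land(\psi\to\chi))$, and one closing application of $\textbf{RW}$ along the tautology $\psi\land(\psi\to\chi)\to\chi$ produces $\varphi\supp\chi$.

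The main obstacle is $\textbf{OR}$, because it is a genuine weakening of the antecedent (to $\varphi\lor\chi$) while monotonicity has been deliberately excluded. I would simulate the needed weakening through $\mathbf{SH}$. Using $\textbf{LLE}$ with $\varphi\IFF((\varphi\lor\chi)\land\varphi)$, the hypothesis $\varphi\supp\psi$ turns into $((\varphi\lor\chi)\land\varphi)\supp\psi$, whence $\mathbf{SH}$ produces $(\varphi\lor\chi)\supp(\varphi\to\psi)$; symmetrically $\chi\supp\psi$ yields $(\varphi\lor\chi)\supp(\chi\to\psi)$. Applying $\textbf{AND}$ to these two, together with the $\mathbf{ID}$-instance $(\varphi\lor\chi)\supp(\varphi\lor\chi)$, gives $(\varphi\lor\chi)\supp\big((\varphi\lor\chi)\land(\varphi\to\psi)\land(\chi\to\psi)\big)$, and a final $\textbf{RW}$ along the tautology $(\varphi\lor\chi)\land(\varphi\to\psi)\land(\chi\to\psi)\to\psi$ closes the derivation.

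The step I expect to require the most care is exactly this combined use of $\mathbf{SH}$ and $\textbf{LLE}$ to relativise each disjunct's support to the disjunction $\varphi\lor\chi$ without invoking monotonicity; everything else is routine bookkeeping with $\textbf{RW}$ and $\textbf{AND}$. It is worth flagging that I expect none of the five derivations to require $\mathbf{S5_F}$ or $\mathbf{ST}$, which isolates the precise roles those stronger principles play elsewhere in the system.
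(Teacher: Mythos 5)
Your proof is correct and takes essentially the same route as the paper's: \textbf{RW} and \textbf{AND} as instances of $\mathbf{RCK}$ ($n=1$ and $n=2$), \textbf{CUT} via $\mathbf{SH}$ followed by recombination and a closing weakening step, and \textbf{OR} by relativising each disjunct to $\varphi\lor\chi$ through \textbf{LLE} and $\mathbf{SH}$, then recombining with $\mathbf{ID}$, \textbf{AND} and \textbf{RW}. The only differences are cosmetic: you spell out the \textbf{LLE} derivation (via $\mathbf{ID}$, \textbf{RW} and $\mathbf{LL+}$) that the paper dismisses as trivial, and your closing remark that neither $\mathbf{ST}$ nor $\mathbf{S5_F}$ is used matches the paper's derivations exactly.
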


\begin{proof}
Trivial for \textbf{RW}, \textbf{LLE} and \textbf{AND}. The \textbf{CUT} axiom follows by
$\mathbf{RCK}$ applied to formulas obtained by
$\textbf{SH}$  and $\textbf{CL}$.
Axiom \textbf{OR}: 
by applying \textbf{LLE} to
$(\varphi \land (\varphi \lor \chi)) \IFF \varphi$ we get $(\varphi \supp \psi) \IFF ((\varphi \land (\varphi \lor \chi)) \supp \psi$); similarly, we get also
$(\chi \supp \psi) \IFF ((\chi \land (\chi \lor \varphi)) \supp \psi$). The claim follows by two applications of $\textbf{SH}$, followed by
$\textbf{CL}$, $\textbf{RCK}$, $\textbf{ID}$, $\textbf{AND}$ and $\textbf{RW}$.
Full proof in appendix.
\end{proof}

\begin{remark}
\label{CMrejection}
Another reason for rejecting \textbf{CM} is that, in conjunction with \textbf{CUT} and \textbf{RW}, it permits to derive \textbf{REC}:
$((\varphi \supp \psi)\land (\psi \supp\varphi))\rightarrow ((\varphi\supp\chi) \leftrightarrow  (\psi \supp\chi) )$ (derivation can be found in the appendix). \textbf{REC} is too strong for a support operator since two statements that support each other do not necessarily support the same statements. For instance, the proposition $\mathit{GoodV_i}$ and the statement "$V_i$ always responds quickly to a customer's question" may support each other but do not support the same statements; the latter may support that $V_i$ uses an AI tool to answer while the former does not.

\end{remark}

A strong connection holds between $\supp$ and $\F$, the dyadic deontic logic introduced in~\cite{A84} and axiomatized in~\cite{Xav2015} as follows ($\bigcirc(\psi / \varphi)$ stands for ``$\psi$ \textit{is obligatory under the condition} $\varphi$''):



\begin{itemize}

\item Axioms:
\[
\begin{array}{rl}
     \mathbf{CL}  & \textrm{All truth-functional tautologies} \\ 
     \mathbf{K_\Box}   & \Box (\varphi \rightarrow \psi) \rightarrow \Box \varphi \rightarrow \Box \psi 
     \\
     \mathbf{T}   & \Box \varphi \rightarrow \varphi
     \\ 
     \mathbf{5}   & \lozenge \varphi \rightarrow \Box \lozenge \varphi 
     \\
     \mathbf{COK} & \bigcirc (\psi \rightarrow \chi / \varphi) \rightarrow (\bigcirc (\psi / \varphi) \rightarrow \bigcirc (\chi / \varphi)) 
     \\ 
     \mathbf{Abs} & \bigcirc (\varphi  / \psi) \rightarrow \Box \! \bigcirc \!(\varphi / \psi) 
     \\
     \mathbf{Nec} & \Box \varphi \rightarrow \bigcirc (\varphi / \psi) 
     \\ 
     \mathbf{Ext} & \Box (\varphi \leftrightarrow \psi) \rightarrow (\bigcirc(\chi / \varphi) \leftrightarrow \bigcirc(\chi / \psi) ) 
     \\
     \mathbf{ID}  & \bigcirc(\varphi / \varphi)
     \\ 
     \mathbf{SH}  & \bigcirc(\varphi / \psi \land \chi) \rightarrow \bigcirc (\chi \rightarrow \varphi / \psi) \\
     \mathbf{D^*}  & \lozenge \psi \rightarrow (\bigcirc (\varphi / \psi) \rightarrow \neg \!\bigcirc \! ( \neg \varphi / \psi)) 
\end{array}
\]

\item Rules: modus ponens \textbf{MP} and necessitation for $\Box$.

\end{itemize}


%


The flat fragment (i.e., $\Box$ and $\bigcirc$ apply only to formulas of $\ForCL$) of the language of $\F$ can be translated into our language as follows:

\begin{definition}\label{def:translation}
Let $\chi$ be any formula in the flat fragment of $\F$. 
The translation $\chi^*$ using $\supp$ is 
($\varphi, \psi \in \ForCL$):
\[\begin{array}{rclp{1cm}rcl}
\varphi^* & \mapsto & \varphi & &
     (\Box \varphi)^* & \mapsto & \neg \varphi \supp \bot \\
     (\lozenge \varphi)^* & \mapsto & \neg (\varphi \supp \bot) & &
     (\bigcirc(\psi / \varphi))^* & \mapsto & \varphi \supp \psi  
\end{array}\]
\end{definition}



We show that with the exception of $\mathbf{5}$ and $\mathbf{Abs}$, all axioms and rules of $\F$ (within the flat fragment) are derivable
in the axiomatization for $\supp$.
This establishes a first link between $\F$ and $\supp$.

\begin{theorem} 
\label{teo:connectionF}
 The translation $^\ast$ of all axioms and rules of $\F$ -- but $\mathbf{5}$ and $\mathbf{Abs}$ -- are derivable in the axiomatization for $\supp$.
\end{theorem}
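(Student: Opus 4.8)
The plan is to compute the translation $^\ast$ of each axiom and rule in the flat fragment of $\F$ and check, one at a time, that the result is a theorem (or a derivable rule) of the $\supp$-system of Definition~\ref{def:SBTTrust}, freely using the derived principles $\mathbf{RW}$, $\mathbf{LLE}$, $\mathbf{AND}$ and $\mathbf{OR}$ granted by Theorem~\ref{teo:Minimal_System}. Three cases are immediate once the translation is written out: $\mathbf{ID}$ translates to $\varphi\supp\varphi$, literally axiom $\mathbf{ID}$; $\mathbf{SH}$ translates to $((\psi\land\chi)\supp\varphi)\rightarrow(\psi\supp(\chi\rightarrow\varphi))$, literally axiom $\mathbf{SH}$; and $\mathbf{Ext}$ translates to $(\neg(\varphi\IFF\psi)\supp\bot)\rightarrow((\varphi\supp\chi)\leftrightarrow(\psi\supp\chi))$, literally $\mathbf{LL+}$. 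Truth-functional tautologies ($\mathbf{CL}$) translate to truth-functional tautologies over the enlarged atom set (each $\varphi\supp\psi$ treated as atomic), hence are covered by $\mathbf{CL}$ of our system, and $\mathbf{MP}$ is shared.

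For the remaining purely propositional axioms I would argue directly. $\mathbf{T}^\ast$ is $(\neg\varphi\supp\bot)\rightarrow\varphi$, obtained from the $\mathbf{ST}$-instance $(\neg\varphi\supp\bot)\rightarrow\neg\neg\varphi$ together with $\mathbf{CL}$. $\mathbf{COK}^\ast$ is $(\varphi\supp(\psi\rightarrow\chi))\rightarrow((\varphi\supp\psi)\rightarrow(\varphi\supp\chi))$, which follows from a single application of $\mathbf{RCK}$ (with $n=2$) to the tautology $((\psi\rightarrow\chi)\land\psi)\rightarrow\chi$, or equivalently by $\mathbf{AND}$ then $\mathbf{RW}$. $\mathbf{D^*}^\ast$ is $\neg(\psi\supp\bot)\rightarrow((\psi\supp\varphi)\rightarrow\neg(\psi\supp\neg\varphi))$: assuming $\psi\supp\varphi$ and $\psi\supp\neg\varphi$, axiom $\mathbf{AND}$ gives $\psi\supp(\varphi\land\neg\varphi)$ and $\mathbf{RW}$ gives $\psi\supp\bot$, contradicting the antecedent, so $\neg(\psi\supp\neg\varphi)$ follows by $\mathbf{CL}$.

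The cases that genuinely exploit $\mathbf{S5_F}$ are necessitation for $\Box$, $\mathbf{K_\Box}$, and $\mathbf{Nec}$. For necessitation I would note that $\vdash\varphi$ (with $\varphi\in\ForCL$) yields $\vdash\neg\varphi\rightarrow\bot$, so $\mathbf{RW}$ applied to this, together with the $\mathbf{ID}$-instance $\neg\varphi\supp\neg\varphi$, produces $\neg\varphi\supp\bot=(\Box\varphi)^\ast$. For $\mathbf{K_\Box}^\ast$, namely $(\neg(\varphi\rightarrow\psi)\supp\bot)\rightarrow((\neg\varphi\supp\bot)\rightarrow(\neg\psi\supp\bot))$, I would first use two applications of $\mathbf{ST}$ to show that the two antecedents propositionally entail $\varphi\rightarrow\psi$ and $\varphi$, hence $\psi$; that is, $((\neg(\varphi\rightarrow\psi)\supp\bot)\land(\neg\varphi\supp\bot))\rightarrow\psi$ is derivable, and one application of $\mathbf{S5_F}$ with $\chi=\psi$ turns this into $((\neg(\varphi\rightarrow\psi)\supp\bot)\land(\neg\varphi\supp\bot))\rightarrow(\neg\psi\supp\bot)$, i.e.\ $\mathbf{K_\Box}^\ast$ up to currying. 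This displays the ``$\Sf$-like'' strength of $\mathbf{S5_F}$: anything propositionally entailed by a family of $\supp\bot$-facts is again a $\supp\bot$-fact. I expect $\mathbf{Nec}$ to be the main obstacle, since its translation $(\neg\varphi\supp\bot)\rightarrow(\psi\supp\varphi)$ has consequent $\psi\supp\varphi$, which is \emph{not} of the shape $\sigma\supp\bot$, so $\mathbf{S5_F}$ cannot deliver it directly. My plan here is a detour through $\mathbf{LL+}$: first apply $\mathbf{S5_F}$ with $\chi=\neg\psi\lor\varphi$ (using $\mathbf{ST}$ to extract $\varphi$, whence $\neg\psi\lor\varphi$) to obtain the ``left strengthening for $\bot$'' fact $(\psi\land\neg\varphi)\supp\bot$; then, since $\neg(\psi\IFF(\psi\land\varphi))$ is propositionally equivalent to $\psi\land\neg\varphi$, rewrite this by $\mathbf{LLE}$ as $\neg(\psi\IFF(\psi\land\varphi))\supp\bot$ and feed it to $\mathbf{LL+}$ (with $\chi=\varphi$), yielding $(\psi\supp\varphi)\leftrightarrow((\psi\land\varphi)\supp\varphi)$; finally $(\psi\land\varphi)\supp\varphi$ holds by $\mathbf{ID}$ and $\mathbf{RW}$, so $\psi\supp\varphi$ follows.

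Finally, I would record why $\mathbf{5}$ and $\mathbf{Abs}$ are excluded: their instances require $\Box$ to be applied to a modal formula ($\lozenge\varphi$, respectively $\bigcirc(\varphi/\psi)$), which lies outside the flat fragment on which $^\ast$ is defined, so they contribute no flat instance to translate. The one side condition to keep checking throughout is the requirement that every application of $\mathbf{S5_F}$ and $\mathbf{RCK}$ stays within $\Lan$ — concretely, that the formula $\chi$ chosen in each $\mathbf{S5_F}$ step is propositional, which holds in all applications above.
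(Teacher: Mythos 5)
Your proposal is correct and follows essentially the same route as the paper's proof: the same case split (the translations of $\mathbf{ID}$, $\mathbf{SH}$, $\mathbf{Ext}$, $\mathbf{T}$ read off directly), $\mathbf{AND}$ plus $\mathbf{RW}$ for $\mathbf{D^*}$ and $\mathbf{COK}$, two $\mathbf{ST}$-instances fed into $\mathbf{S5_F}$ for $\mathbf{K_\Box}$, and $\mathbf{ID}$ with $\mathbf{RW}$/$\mathbf{RCK}$ for necessitation of $\Box$. The only local divergence is the final step of $\mathbf{Nec}$, where the paper passes from $(\psi\land\neg\varphi)\supp\bot$ to $\psi\supp\varphi$ by a single application of $\mathbf{SH}$ (plus $\mathbf{CL}$), whereas you detour through $\mathbf{LLE}$ and $\mathbf{LL+}$ with the equivalence $\neg(\psi\IFF(\psi\land\varphi))\dashv\vdash\psi\land\neg\varphi$; both are valid, the paper's being slightly shorter.
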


\begin{proof}
The claim for axioms $\mathbf{T}$, $\mathbf{Ext}$, $\mathbf{ID}$, and $\mathbf{SH}$  directly follows from the translation. For the remaining axioms: The translation $^\ast$ of \textbf{D}$^\ast$ is 
$\neg (\varphi \supp \bot) \to
\neg ((\varphi \supp \psi) \land (\varphi \supp \neg \psi))$. Its
contraposition $((\varphi \supp \psi) \land (\varphi \supp \neg \psi)) \rightarrow (\varphi \supp \bot) $ is an instance of \textbf{AND}.
The translation of $\mathbf{K_\Box}$ can be derived by two applications of
$\mathbf{ST}$, the axioms 
$\mathbf{CL}$ and the rule
$\mathbf{S5_F}$.
$\textbf{COK}$ follows by      $\mathbf{AND}$, $\mathbf{RW}$ and
$\mathbf{CL}$.
$\textbf{Nec}$ follows 
by $\mathbf{ST} + \mathbf{CL}$, 
together with the rules
$\mathbf{S5_F}$ and 
$\mathbf{SH}
+ \mathbf{CL}$.
The Necessitation rule  for $\Box$ follows by modus ponens using $\mathbf{RCK}$ and 
$\mathbf{ID}$.
 Full proofs in appendix.
\end{proof}

\begin{remark}
\label{rem:F}
The translation of axioms {\bf Abs} and $\mathbf{5}$ from $\F$ results in formulas containing nested applications of the support operator. In Sect.~\ref{sect:sem}, we will see that the axioms and rules for $\supp$ axiomatize the shallow fragment of $\F$. In this regard, the rule $\mathbf{S5_F}$, which does not correspond to any rule known in the literature, does not follow from the remaining axioms and rules for $\supp$, and it is needed to derive (some) flat formulas which hold in $\F$.

\end{remark}

\begin{example}\label{Example:derivation}
Let us see \logic\ at work.
Assume that a customer $C$ believes the two formulas supported by $\mathit{Def}CV_i$ discussed in Example~\ref{Example:CM}.
If $C$ receives a defective item from vendor $V_1$ ($\mathit{Def}CV_1$), from $B(\mathit{Def}CV_1)$ and $B(\mathit{Def}CV_1 \supp \neg\mathit{Good}V_1)$ we derive 
$T_{\mathit{Def}CV_1} (\neg\mathit{Good}V_1)$. Similarly, it also holds that $T_{\mathit{Def}CV_1} (\mathit{Ref}CV_1)$. 
Now, assume that $C$ does indeed receive the refund, thus $B(\mathit{Def}CV_1 \land \mathit{Ref}CV_1)$. 

We can show that it is not the case that $C$ trusts $V_1$, $ \neg T_{\mathit{Def}CV_1 \land \mathit{Ref}CV_1} (\mathit{Good}V_1)$. We use the following abbreviations to write a concise derivation: Let $d := \mathit{Def}CV_1$, $r := \mathit{Ref}CV_1$, and $g := \mathit{Good}V_1$, and, by hypothesis, $T_{d} (\neg g) \land T_{d} (r)$, i.e., $B(d) \land B(d \supp \neg g) \land B(d \supp r)$. Hence:
\[\begin{array}{llr}
(1) & (d \land (d \supp \neg g)) \rightarrow \neg(d \supp g) & (\mathbf{ST}\!+\!\mathbf{D^*}) \\
(2) & ((d \supp r) \land \neg (d \supp g))  \rightarrow  \neg ((d \land r) \supp g) & (\mathbf{CUT}\!+\!\mathbf{CL}) \\
(3) & (d \land  (d \supp \neg g) \land (d \supp r)) \rightarrow  \neg ((d \land r) \supp g) & (1 \land 2) \\
\end{array}\]
Then, applying rule $\bf NB$ to (3) and using the hypothesis together with axiom $\bf KB$, we derive $B(\neg ((d \land r) \supp g))$. This formula, by axiom $\bf DB$, finally implies $\neg T_{d \land r} (g)$.
Note that the lack of axiom CM impedes to derive $T_{\mathit{Def}CV_1 \land \mathit{Ref}CV_1} (\neg\mathit{Good}V_1)$, as shown in Example~\ref{Example:semantics}.
\end{example}

\section{Semantics}\label{sect:sem} 

For evaluating a formula of the form $\varphi \supp \psi$, we intuitively consider only the most likely $\varphi$-scenarios and check whether $\psi$ holds in those scenarios. This approach is inspired by preference-based logics~\cite{grossi2022reasoning}, in which a conditional statement ``If $\varphi$ then $\psi$'' is interpreted as among the ``best'' possible scenarios in which $\varphi$ is true, $\psi$ is true as well. Hence the semantics for \logic\ is built on preference-based models \cite{shoham_preference_models} (for the support statements) and standard relational models (for the belief operator).

Also used in KLM logics and in \AA qvist system $\F$, preference-based models are triples $\langle S, \succeq, V \rangle$, where $S$ denotes a set of states, $V$ a valuation function, and the preference relation $\succeq \subseteq S \times S$ orders the states in $S$. In our context, $w \succeq v$ means that the state $w$ is at least as likely as $v$ ($\F$ uses a \textit{better than} interpretation, while the KLM logic $\Plogic$ uses a \textit{preferred to} interpretation). Although similar to $\F$ and $\Plogic$, our approach differs from those for two important aspects. First, in our semantics, we include a component for belief formulas. Second, instead of having a unique preference frame, in a \textit{Trust frame}, the set of states $S$ is partitioned into multiple preference frames $\langle S_i, \succeq_i \rangle$; this allows us to consider different support systems within the same model. Note that, without partitioning, the unique given support system would necessarily have to be believed. 
In our interpretation, a statement $\varphi \supp \psi$ is true in a state $w \in S_i$ if among all the $\varphi$-states in $S_i$ ($||\varphi||_i$) the most likely $\varphi$-states in $S_i$ ($\mathit{most}(||\varphi||_i)$) satisfy $\psi$; $\mathit{most}(||\varphi||_i)$
consists of the maximal elements in the set of all $\varphi$-states of $S_i$ according to $\succeq_i$.

In line with systems $\F$ and $\Plogic$, our semantics include a limitedness condition: $||\varphi||_i\not=\emptyset \Rightarrow \mathit{most}(||\varphi||_i)\not=\emptyset$.
Limitedness allows us to express when a formula $\varphi$ is impossible in a partition $S_i$ using $\varphi \supp \bot$. Hence, limitedness restricts \logic\ to support only 
non-contradictory options (with the exception of $\bot \supp \bot$). Consequently, an agent will never place trust in a blatant contradiction.\footnote{This is different from trusting contradicting statements separately, which is still possible in \logic.}
The definitions in this section characterize the frames and models on which our logic is based on.

\begin{definition}[Trust frame]\label{def:frame}
   Let $\mathcal{F}:=\langle S, (S_i)_{i \in I}, (\succeq_i)_{i \in I}, R \rangle$ where
   \begin{itemize}
        \item $\langle S, R \rangle$  is a serial and transitive Kripke frame;
       \item  $(S_i)_{i \in I}$ is a partition\footnote{$\bigcup_{i \in I} S_i = S$ and $\forall i, j \in I: ~ S_i \cap S_j = \emptyset$.} of $S$;
       \item For each $i \in I$:  $\langle S_i , (\succeq_i) \rangle$ is a preference frame (therefore $\succeq_i \subseteq S_i \times S_i$).
   \end{itemize}
\end{definition}





\begin{definition}[Trust model and truth conditions]\label{def:model}
   Let $\mathcal{M}:=\langle S, (S_i)_{i \in I}, (\succeq_i)_{i \in I}, R, V \rangle$ where
   \begin{itemize}
       \item $\mathcal{F}:=\langle S, (S_i)_{i \in I}, (\succeq_i)_{i \in I}, R\rangle$ is a Trust frame;
       \item $V: Prop \mapsto 2^S $ is a Valuation function;
       \item For each $i \in I$,  $\langle S_i , (\succeq_i), V \rangle$ fulfills the limitedness condition:
       for every propositional formula $\varphi$
 $$||\varphi||_i\not=\emptyset \Rightarrow \mathit{most}(||\varphi||_i)\not=\emptyset$$
 \item $\M, s \models p $ iff $s \in V(p)$;
 \item $\M, s \models \neg \alpha$ iff $\M, s \not\models \alpha$;
 \item $\M, s \models \alpha \land \beta $ iff $\M, s \models \alpha $ and $\M, s \models \beta$;
 \item $\M, s \models \varphi \supp \psi $ iff $\mathit{most}(||\varphi||_i) \subseteq ||\psi||_i$ for $s \in S_i$;
 \item $\M, s \models B(\varphi) $ iff $\forall v: (sRv \rightarrow \M, v \models \varphi) $
\end{itemize}
\noindent where $||\varphi||_i:= \{v \in S_i: \M, v \models \varphi \}$ and \\
$\mathit{most}(||\varphi||_i):= \{s \in ||\varphi||_i: \forall v [(v \in ||\varphi||_i \land v \succeq_i s) \rightarrow s \succeq_i v] \}$.

%
   
\end{definition}


\begin{remark}\label{rem:Models}
   Three observations: $(i)$ The semantics for formulas containing $\lor, \rightarrow$ and $\leftrightarrow$ are defined through $\neg$ and $\land$ as usual.
    $(ii)$ We do not assume any property on the relations $\succeq_i$ to keep the model as general as possible. $(iii)$ 
   $\M, w \models \varphi \supp \psi$ holds if $w$ is part of a set of states in which $\varphi$ supports $\psi$ is true. An agent may or may not believe $\varphi \supp \psi$, independently from the fact that $\varphi \supp \psi$ holds or not. This gives us the possibility to capture an agent's misinformation.
\end{remark}

The two notions of a formula $\alpha$ being a semantical consequence of a set of formulas $\Phi$ (in symbols: $\Phi \models \alpha$) and $\alpha$ being valid (in symbols: $\models \alpha$) are defined as usual.



 




 


    



\begin{example}[Ctd Ex.~\ref{Example:derivation}]\label{Example:semantics}

%
%
We use our semantics to show that 
$T_{\mathit{Def}CV_1 \land \mathit{Ref}CV_1} (\neg \mathit{Good}V_1)$ (i.e., the customer  trusts that $V_1$ is not a good vendor on the basis of having received a defective item and a refund) does not follow from
 $T_{\mathit{Def}CV_1} (\mathit{Ref}CV_1)$ and $ T_{\mathit{Def}CV_1} (\neg\mathit{Good}V_1)$. 
Consider the Trust model $\mathcal{M}:=\langle S, (S_i)_{i \in I}, (\succeq_i)_{i \in I}, R, V \rangle$, with $I=\{1\}$, $S_1:=\{s, w, v\}$, $s \succeq_1 w$ and $w \succeq_1 v$ for support, $v R s, w R w, s R s$ for belief, and $V(\mathit{Def}CV_1):=S$, $V(\mathit{Ref}CV_1):=\{s, v\}$, $V(\mathit{Good}V_1):=\{v\}$. A graphical representation of $\mathcal{M}$ is below; $d := \mathit{Def}CV_1$, $r := \mathit{Ref}CV_1$, $g := \mathit{Good}V_1$, and solid and dashed arrows represent the preference relation $\succeq_1$ and the accessibility relation $R$, respectively.


\begin{center}

	\begin{tikzpicture}[
			world/.style={circle, draw=black!80, minimum size=10mm},
			]
			\node[world]      (s)       at (4,0)     {$d,r$};
   \node      (s1)       at (4,-0.8)     {$s$};
            \node[world]      (w)       at (2, 0)     {$d$};
            \node      (w1)       at (2,-0.8)     {$w$};
			\node[world]      (v)       at (0,0)     {$d,r,g$};
   \node      (v1)       at  (0,-0.8)    {$v$};

			\path[->]
			(v)   edge (w)
            (w)   edge (s);

   \path[->,thick,black,dashed]
    (v)   edge [bend left=60] (s)
     (w)   edge  [loop above] (w)
             (s)   edge  [loop above] (s) ;
		\end{tikzpicture}

\end{center}
Let $v$ be the current state. In $v$ the customer $C$ believes $\mathit{Def}CV_1$ and $\mathit{Ref}CV_1$ (i.e. $B(\mathit{Def}CV_1 \land \mathit{Ref}CV_1)$). Since $\mathit{most}(||\mathit{Def}CV_1||_1)=\{s\} \subseteq \{w, s\}=||\neg \mathit{Good}V_1||_1$, we have that $\M, s \models \mathit{Def}CV_1 \supp \neg \mathit{Good}V_1$. Hence, $\M, v \models B(\mathit{Def}CV_1 \supp \neg \mathit{Good}V_1)$ and $\M, v \models T_{\mathit{Def}CV_1} (\neg \mathit{Good}V_1)$.
Similarly, $\mathit{most}(||\mathit{Def}CV_1||_1)=\{s\} \subseteq \{v, s\}=||\mathit{Ref}CV_1||_1$, we also have that $\M, s \models \mathit{Def}CV_1 \supp \mathit{Ref}CV_1$. Hence, $\M, v \models B(\mathit{Def}CV_1 \supp \mathit{Ref}CV_1)$ and $\M, v \models T_{\mathit{Def}CV_1} ( \mathit{Ref}CV_1)$.
Now, since $\mathit{most}(||\mathit{Def}CV_1 \land \mathit{Ref}CV_1||_1)=\{v, s\} \not\subseteq \{w, s\}=||\neg \mathit{Good}V_1||_1$, we have that $\M, s \not\models (\mathit{Def}CV_1\land\mathit{Ref}CV_1) \supp \neg\mathit{Good}V_1$. Hence, $\M, v \models \neg B((\mathit{Def}CV_1\land\mathit{Ref}CV_1) \supp \neg\mathit{Good}V_1)$. Finally, we have $\M, v \models \neg T_{(\mathit{Def}CV_1\land\mathit{Ref}CV_1)} (\neg\mathit{Good}V_1)$.

\end{example}

We now examine the relation between $\F$ and $\supp$. Theorem~\ref{teo:connectionF} has already highlighted a syntactic connection (from $\F$ to $\supp$ via the translation $^\ast$ in Def.~\ref{def:translation}). Here, by using their semantics, we uncover a stronger tie. Recall that $\F$ is sound and complete w.r.t.\ all preference models $\langle S, \succeq, V \rangle$ 
which fulfil the limitedness condition, see~\cite{Xav2015}. 
We denote by $\models^\F$ the semantical consequence relation in $\F$. 
%
%
%


\begin{theorem}\label{teo:flat-fragmentF}
    For any set of formulas $\Gamma$ and formula $\alpha$ in the language of $\F$ that do not contain nested modal operators, we have:
    $\Gamma \models^\F \alpha  \Leftrightarrow   \Gamma^* \models \alpha^*$.

\end{theorem}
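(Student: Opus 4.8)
The plan is to reduce the claimed equivalence to a pointwise semantic correspondence between a single partition of a Trust model and an $\F$-preference model, and then transport countermodels in both directions. First I extend the translation $^\ast$ of Def.~\ref{def:translation} homomorphically over the boolean connectives (so that $(\alpha\land\beta)^\ast=\alpha^\ast\land\beta^\ast$, $(\neg\alpha)^\ast=\neg\alpha^\ast$, and so on), and set $\Gamma^\ast=\{\gamma^\ast:\gamma\in\Gamma\}$. The crucial observation is that a limited preference model $\langle S,\succeq,V\rangle$ -- the class w.r.t.\ which $\F$ is sound and complete~\cite{Xav2015} -- is literally a Trust model with a single partition ($I=\{1\}$, $S_1=S$, $\succeq_1=\,\succeq$) once an arbitrary serial and transitive relation $R$ (e.g.\ $S\times S$) is added; and conversely every partition $\langle S_i,\succeq_i,V|_{S_i}\rangle$ of a Trust model is itself a limited preference model by Def.~\ref{def:model}.

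The key lemma is a correspondence proved by induction on the structure of a flat $\F$-formula $\chi$: for every limited preference model $\M_\F=\langle S,\succeq,V\rangle$, the associated single-partition Trust model $\M$, and every state $s\in S$,
\[
\M_\F, s\models\chi \iff \M, s\models\chi^\ast .
\]
The boolean cases are immediate, since both $^\ast$ and $\models$ commute with the connectives. For the modal base cases I use the Trust-model clauses of Def.~\ref{def:model} together with limitedness. Indeed $\M,s\models\neg\varphi\supp\bot$ iff $\mathit{most}(||\neg\varphi||_1)\subseteq\emptyset$, and by limitedness $\mathit{most}(||\neg\varphi||_1)=\emptyset$ exactly when $||\neg\varphi||_1=\emptyset$, i.e.\ when $||\varphi||_1=S$, which is precisely the condition for $\M_\F,s\models\Box\varphi$; the case of $\lozenge$ is dual, and $\M,s\models\varphi\supp\psi$ iff $\mathit{most}(||\varphi||_1)\subseteq||\psi||_1$ is by definition the $\F$ truth condition for $\bigcirc(\psi/\varphi)$ (the $\F$ operator and the Trust-model operator $\mathit{most}$ being the same set of $\succeq$-maximal elements). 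Note that in each clause the value depends only on the partition, not on the point $s$, matching the absolute character of the $\F$-modalities.

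With the lemma in hand, both directions follow by transporting models. For $\Leftarrow$, assume $\Gamma\not\models^\F\alpha$, so some limited preference model and state $s$ satisfy all of $\Gamma$ but falsify $\alpha$; forming the single-partition Trust model with any serial transitive $R$ and applying the lemma yields $\M,s\models\gamma^\ast$ for all $\gamma\in\Gamma$ and $\M,s\not\models\alpha^\ast$, whence $\Gamma^\ast\not\models\alpha^\ast$. For $\Rightarrow$, assume $\Gamma^\ast\not\models\alpha^\ast$, witnessed by a Trust model $\M$ and a state $s\in S_i$; since translated formulas contain no belief operator and every $\supp$-formula is evaluated entirely within $S_i$, their truth at $s$ is unaffected by $R$ and by the other partitions, so the preference model $\langle S_i,\succeq_i,V|_{S_i}\rangle$ together with the lemma gives $\Gamma\not\models^\F\alpha$.

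The main obstacle -- really the point that must be argued carefully rather than being technically hard -- is the faithfulness of the model transfer, which rests on two facts. First, limitedness is exactly what makes $\neg\varphi\supp\bot$ and $\neg(\varphi\supp\bot)$ behave as the global $\Box\varphi$ and $\lozenge\varphi$: without it $\mathit{most}(X)$ could be empty for a nonempty $X$, breaking both modal base cases. Second, the Trust-model clause for $\supp$ is local to a single partition and independent of the accessibility relation $R$, so that collapsing a Trust model to one partition, or expanding a preference model to a one-partition Trust model, preserves the truth of every translated formula. Once these two points are secured, the induction and the two countermodel constructions are routine.
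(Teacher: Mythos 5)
Your proof is correct and takes essentially the same approach as the paper: both directions proceed by transferring countermodels between a limited preference model and a single-partition Trust model (adding an arbitrary serial transitive $R$ such as $S\times S$ in one direction, restricting to the partition $\langle S_i,\succeq_i,V\rangle$ containing the witnessing state in the other), exploiting that translated flat formulas contain no $B$ and are evaluated entirely within one partition. The only difference is presentational: you spell out as an explicit induction, including the role of limitedness in matching $\neg\varphi\supp\bot$ with the global $\Box\varphi$, the pointwise correspondence that the paper compresses into the remark that the evaluations of $\Gamma^*\cup\{\alpha^*\}$ and $\Gamma\cup\{\alpha\}$ coincide.
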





\begin{proof}

Both directions proceed by contraposition. 

    $(\Rightarrow)$ We show that given a Trust model invalidating the semantical consequence for support we can find a preference model invalidating the semantical consequence for $\F$.
    Assume that $\Gamma^* \not\models \alpha^*$. Hence, there exists a Trust model $\mathcal{M}=\langle S, (S_i)_{i \in I}, (\succeq_i)_{i \in I}, R, V \rangle$ and a state $s \in S_i$ such that $\forall \beta^* \in \Gamma: \M, s \models \beta^*$ and $ \M, s \not\models \alpha^*$. We cut down the Trust model into a preference model as follows $\mathcal{M'}:=\langle S_i, \succeq_i, V \rangle$. By definition, $\M'$ is a preference model fulfilling the limitedness condition.\footnote{In \cite{Xav2015}, the limitedness condition is stated for every formula of $\F$. Since the truth sets of obligations and modalities are those of $\top$ or $\bot$, our limitedness condition is equivalent to the one given in $\F$.} Observe that no formula in $\Gamma^* \cup \{\alpha^* \}$ contains the operator $B$. Hence, the evaluation of the formulas in $\Gamma^* \cup \{\alpha^* \}$ at the state $s \in S_i$ coincides with the evaluation of the formulas in $\Gamma \cup \{\alpha\}$ in $\M$. This lets us conclude $\forall \beta \in \Gamma: \M', s \models \beta$ and $\M', s \not \models \alpha$. 

    $(\Leftarrow)$  Given a preference model 
    invalidating $\Gamma \models^{\F} \alpha$, we provide a Trust model invalidating $\Gamma^* \models \alpha^*$. Assume to have
    the preference model $\mathcal{M}=\langle S, \succeq, V \rangle$, fulfilling the limitedness condition and a state $s \in S$ such that $\forall \gamma \in \Gamma: \M, s \models \gamma$ and $ \M, s \not\models \alpha$. We extend it into a Trust model with only one element in the partition, as follows $\mathcal{M'}:=\langle S, (S_i)_{i\in I}, (\succeq_i)_{i\in I}, V, R \rangle$, with $I:=\{1\}$, $S_1:=S$, $\succeq_1:=\succeq$ and $R:= S\times S$. By definition, $\M'$ is a Trust model. Again no formula in $\Gamma^* \cup \{\alpha^* \}$ contains the operator $B$. Hence the evaluation of the formulas in $\Gamma^* \cup \{\alpha^* \}$ at the state $s \in S$ coincides with the evaluation of the formulas in $\Gamma \cup \{\alpha\}$ in $\M$. This lets us conclude $\forall \beta^* \in \Gamma^*: \M', s \models \beta^*$ and $ \M', s \not \models \alpha^*$. 
\end{proof}

By the Soundness and Completeness of \logic\ w.r.t. Trust models, proved in the next section, it follows that the axioms and rules of $\supp$ axiomatize the flat fragment of $\F$.





\section{Soundness, completeness and complexity of \logic}
\label{sect:res}


We start with the soundness of \logic\ w.r.t. Trust models.

\begin{theorem}[Soundness]\label{teo:soundness}
  Given $\Phi \subseteq \ForT$ and $\alpha \in \ForT$ it holds that:  $\Phi \vdash \alpha \Rightarrow  \Phi \models \alpha$.
\end{theorem}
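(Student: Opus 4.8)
The plan is to argue by induction on the length of a derivation witnessing $\Phi \vdash \alpha$, showing that every line $\gamma$ of the derivation satisfies $\Phi \models \gamma$. Two facts drive the induction. First, I would verify that every axiom schema is \emph{valid}, i.e.\ true at every state of every Trust model; since validity entails $\Phi \models \gamma$ for any $\Phi$, axiom lines are then immediate. Second, I must check that the rules preserve the relevant property, and here it is essential to separate Modus Ponens from the ``global'' rules $\mathbf{RCK}$, $\mathbf{S5_F}$ and $\mathbf{NB}$: $\mathbf{MP}$ preserves truth at each individual state (if $\M,s\models\delta$ and $\M,s\models\delta\to\gamma$ then $\M,s\models\gamma$) and hence preserves the consequence relation, whereas the three global rules only preserve validity. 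Under the derivation discipline that yields the deduction theorem (these rules are applied only to lines not depending on the open assumptions $\Phi$, equivalently to theorems), their premises are already valid by the inductive hypothesis, so the weaker fact ``premise valid $\Rightarrow$ conclusion valid'' suffices.

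For the support axioms the verifications are local to a single block $S_i$ of the partition. $\mathbf{ID}$ holds because $\mathit{most}(||\varphi||_i)\subseteq||\varphi||_i$ by definition. Both $\mathbf{ST}$ and $\mathbf{LL+}$ rely crucially on limitedness: if $\M,s\models\varphi\supp\bot$ then $\mathit{most}(||\varphi||_i)=\emptyset$, so limitedness forces $||\varphi||_i=\emptyset$ and hence $\M,s\models\neg\varphi$; similarly $\neg(\varphi\IFF\psi)\supp\bot$ forces $||\varphi||_i=||\psi||_i$, making the two sides of the biconditional in $\mathbf{LL+}$ coincide. For $\mathbf{SH}$ I would use that a $\succeq_i$-maximal element of $||\psi||_i$ that also lies in $||\psi\land\chi||_i$ remains maximal in the smaller set $||\psi\land\chi||_i$, so $\mathit{most}(||\psi\land\chi||_i)\subseteq||\varphi||_i$ transfers to $\mathit{most}(||\psi||_i)\subseteq||\chi\to\varphi||_i$. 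Rule $\mathbf{RCK}$ preserves validity because a valid premise gives $\bigcap_{k\le n}||\varphi_k||_i\subseteq||\varphi_{n+1}||_i$ in every block, while $\mathit{most}(||\psi||_i)\subseteq\bigcap_{k\le n}||\varphi_k||_i$ whenever all the $\psi\supp\varphi_k$ hold.

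The belief axioms are the standard $\KDf$ validities: $\mathbf{KB}$ and the necessitation rule $\mathbf{NB}$ hold in every Kripke model, $\mathbf{DB}$ uses seriality of $R$, and $\mathbf{4B}$ uses its transitivity, both guaranteed by the Trust frame. These cases are routine and I would dispatch them quickly.

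The main obstacle is the soundness of $\mathbf{S5_F}$, and this is where I expect to spend most care. The key observation is that the truth value of any support literal $(\neg)(\varphi_k\supp\psi_k)$ at a state depends only on the block $S_i$ containing that state, not on the state itself, since $\mathit{most}(||\varphi_k||_i)\subseteq||\psi_k||_i$ is a property of the whole block. Consequently, if the antecedent $A=\bigwedge_k(\neg)(\varphi_k\supp\psi_k)$ holds at some $s\in S_i$, it holds at \emph{every} state of $S_i$. Assuming the premise $A\to\chi$ is valid, it then follows that $\chi$ holds throughout $S_i$, i.e.\ $||\neg\chi||_i=\emptyset$; limitedness yields $\mathit{most}(||\neg\chi||_i)=\emptyset$ and hence $\M,s\models\neg\chi\supp\bot$. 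Thus $A\to(\neg\chi\supp\bot)$ is valid, establishing that $\mathbf{S5_F}$ preserves validity. This step is precisely the semantic counterpart of the ``absolute-like'' local behavior of $\supp$ advertised when the rule was introduced, and combining all the cases above closes the induction.
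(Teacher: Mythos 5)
Your proposal is correct and takes essentially the same route as the paper's proof: induction on the length of the derivation, with the two nontrivial cases being $\mathbf{LL+}$ (where limitedness turns $\mathit{most}(||\neg(\varphi\IFF\psi)||_i)=\emptyset$ into $||\varphi\IFF\psi||_i=S_i$) and $\mathbf{S5_F}$ (where the key observation, identical to the paper's, is that all states in a partition block $S_i$ satisfy the same support literals, so a true antecedent propagates to the whole block); you moreover usefully spell out details the paper leaves implicit, namely the remaining axiom cases and the distinction between $\mathbf{MP}$, which preserves truth state-by-state, and the global rules $\mathbf{RCK}$, $\mathbf{S5_F}$, $\mathbf{NB}$, which only preserve validity and must be applied to theorem lines under the derivation discipline that yields the deduction theorem. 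One cosmetic slip: in the $\mathbf{S5_F}$ case, $\mathit{most}(||\neg\chi||_i)=\emptyset$ follows from $||\neg\chi||_i=\emptyset$ by the definition of $\mathit{most}$ alone, so limitedness is not needed there (it runs in the opposite direction: nonempty truth set implies nonempty $\mathit{most}$).
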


\begin{proof}
    Proceed, as usual, by induction of the length of the derivation. We distinguish cases according to the last rule applied, showing the details of the cases for
    the axiom \textbf{LL+} and the rule $\mathbf{S5_F}$.

    \textbf{LL+}: Given a Trust model $\mathcal{M}=\langle S, (S_i)_{i \in I}, (\succeq_i)_{i \in I}, R, V \rangle$ and a state $s \in S_i$ such that $\M,s \models \neg (\varphi \IFF \psi) \supp \bot$, then we get $\mathit{most}(||\neg (\varphi \IFF \psi)||_i) \subseteq \emptyset$. Given the limitedness assumption, this is equivalent to $||\neg (\varphi \IFF \psi)||_i = \emptyset$ and furthermore to $||\varphi \IFF \psi||_i = S_i$. Hence, $\varphi$ and $\psi$ are equivalent in every state of $S_i$. Therefore the sets $\mathit{most}(||\varphi||_i)$ and $\mathit{most}(||\psi||_i)$ coincide, i.e., $\M,s \models (\varphi \supp \chi) \IFF (\psi \supp \chi)$.

    $\mathbf{S5_F}$: Given a Trust model $\mathcal{M}=\langle S, (S_i)_{i \in I}, (\succeq_i)_{i \in I}, R, V \rangle$, we assume $((\neg)(\varphi_1 \supp \psi_1) \land \dots \land (\neg)(\varphi_n \supp \psi_n))\rightarrow \chi$ to be true in every state of $\mathcal{M}$. Given a state $s \in S_i$ such that $\M, s \models ((\neg)(\varphi_1 \supp \psi_1) \land \dots \land (\neg)(\varphi_n \supp \psi_n))$ holds, it follows that $\forall w \in S_i$ $\M, w \models ((\neg)(\varphi_1 \supp \psi_1) \land \dots \land (\neg)(\varphi_n \supp \psi_n))$ because by virtue of the semantics of $\supp$, all the states of a given partition class satisfy the same (negated) support formulas.    
    Therefore, we have that $\forall w \in S_i$ $\M, w \models \chi$, which means $||\neg \chi||_i = \emptyset$ and finally $\M, s \models \neg \chi \supp \bot$.
\end{proof}

Completeness is shown via the canonical model construction, adapted to our framework from the technique outlined in \cite{grossi2022reasoning}. The needed modifications are the following. First, we have to ensure that \logic\ allows us to derive all the axioms and rules required for the construction to proceed. Furthermore, unlike the models considered in~\cite{grossi2022reasoning}, Trust models incorporate multiple preference frames, the belief operator $B$, and include the limitedness condition.

The modifications are implemented as follows. The required axioms and rules for their proof to go through are 
those of $\F$ without $\mathbf{D^*}$ (which corresponds to limitedness). In Sect.~\ref{sec:axiom} we have shown that with the exception of \textbf{5} and \textbf{Abs} the axioms of $\F$ are derivable in \logic. 
We prove that we can derive all the necessary properties of the canonical model even in the absence of axioms \textbf{5} and \textbf{Abs}, by relying on other rules of \logic, primarily on $\bf S5_F$. The multiple preference frames are handled by partitioning the maximal consistent sets used in the canonical model construction into equivalence classes containing the same support formulas. The addition of belief is easy: We equip the canonical model with the accessibility relation in the usual Kripke fashion. 
Incorporating the limitedness condition poses the challenge of guaranteeing that, in every preference model of our canonical model, each non-empty set $||\varphi||_i$ contains a maximal element according to the preference relation. We address this by using the axiom \textbf{ST}.
 %



\begin{definition}\label{def:MCS}
    A set $\Gamma \subseteq \ForT$ is called a maximal consistent set (MCS for short) if (a) $\Gamma \not\vdash \bot$, and (b) For every $\alpha \in \ForT$ either $\alpha \in \Gamma$ or $\neg \alpha \in \Gamma$.
\end{definition}

Although not all the states in a model validate the same support formulas, we still need to make sure that all the states inside the same preference frame $S_i$ do. 
For that reason, we partition the maximal consistent sets into equivalence classes, containing the same support formulas. We also define a set $\supp_\varphi(\Gamma)$ containing all formulas which are supported in a MCS $\Gamma$ by a formula $\varphi$.

\begin{definition}\label{def:supp-form}
    Given $\Gamma, \Delta \subseteq \ForT$ and $\varphi \in \ForCL$ we define:
\begin{itemize}
    \item $\Gamma^\supp := \{\chi \supp \psi \in \Gamma\} \qquad \qquad \hspace{11pt}$
(We write  $\Gamma \leftrightsquigarrow \Delta$ if $\Gamma^\supp = \Delta^\supp$). 
    \item $\supp_\varphi(\Gamma) := \{\psi: \varphi \supp \psi \in \Gamma\} \qquad$
(We call $\Delta$  $\varphi$-likely for $\Gamma$ if $\supp_\varphi(\Gamma)  \subseteq \Delta$). 
\end{itemize}
\end{definition}

\begin{fact}
        $\leftrightsquigarrow$ is an equivalence relation on the set of all MCSs. We write $[\Gamma]_\leftrightsquigarrow$ for the equivalence class containing $\Gamma$.
\end{fact}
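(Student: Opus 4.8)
The plan is to observe that $\leftrightsquigarrow$ is, by construction, the \emph{kernel} of the map $\Gamma \mapsto \Gamma^\supp$ sending each set to its collection of support formulas, and hence automatically inherits the equivalence-relation structure of set equality. Concretely, by Definition~\ref{def:supp-form}, $\Gamma \leftrightsquigarrow \Delta$ holds precisely when $\Gamma^\supp = \Delta^\supp$, where $\Gamma^\supp = \{\chi \supp \psi \in \Gamma\}$. Since the defining condition is a literal equality between two sets, the three properties reduce directly to the corresponding properties of $=$ on the powerset of $\ForT$.

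I would then simply spell out the three verifications. For reflexivity, $\Gamma^\supp = \Gamma^\supp$ trivially, so $\Gamma \leftrightsquigarrow \Gamma$. For symmetry, if $\Gamma \leftrightsquigarrow \Delta$, i.e.\ $\Gamma^\supp = \Delta^\supp$, then $\Delta^\supp = \Gamma^\supp$ by symmetry of equality, whence $\Delta \leftrightsquigarrow \Gamma$. For transitivity, if $\Gamma^\supp = \Delta^\supp$ and $\Delta^\supp = \Theta^\supp$, then $\Gamma^\supp = \Theta^\supp$ by transitivity of equality, so $\Gamma \leftrightsquigarrow \Theta$. Since $\leftrightsquigarrow$ is thus reflexive, symmetric, and transitive on the whole of $\powerset(\ForT)$, its restriction to the set of all MCSs is a fortiori an equivalence relation, which justifies writing $[\Gamma]_\leftrightsquigarrow$ for the class of $\Gamma$.

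I expect there to be essentially no obstacle here: maximal consistency plays no role in the argument, since the equivalence properties follow purely from the fact that $\leftrightsquigarrow$ is defined as equality of the images $\Gamma^\supp$. The only point worth a one-line remark is that $\Gamma^\supp$ is a well-defined set for every $\Gamma \subseteq \ForT$, which is immediate from its definition as a subset of $\Gamma$. The genuine content of this part of the development lies not in the Fact itself but in the subsequent use of the classes $[\Gamma]_\leftrightsquigarrow$ to build the preference frames $S_i$ of the canonical model, where one must show that all MCSs in a single class agree on their support formulas and that $\mathbf{S5_F}$ forces these classes to behave as the partition blocks of a Trust frame.
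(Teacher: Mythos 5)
Your proof is correct and matches the paper's (implicit) reasoning: the paper states this as an unproved Fact precisely because, as you observe, $\leftrightsquigarrow$ is the kernel of the map $\Gamma \mapsto \Gamma^\supp$ and thus inherits reflexivity, symmetry, and transitivity from equality of sets. Nothing is missing, and your closing remark correctly locates the real mathematical content in the later use of the classes $[\Gamma]_\leftrightsquigarrow$ rather than in the Fact itself.
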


Each equivalence class serves as a basis for a preference frame in our canonical model. The maximal consistent sets, which are $\varphi$-likely for $\Gamma$ are our candidates for the most likely $\varphi$ states in the preference frame based on the equivalence class $[\Gamma]_\leftrightsquigarrow$, as they contain all formulas supported by $\varphi$.

Before
moving forward, we need a result that will be used repeatedly.
It asserts that if $\psi$ is not supported by $\varphi$ in $\Gamma,$ then we can construct a MCS $\Delta$ with the same support formulas as $\Gamma$, and including the negation of $\psi$ as well as all the propositions supported by $\varphi$.


\begin{lemma}\label{lem:extension}
   Given a MCS $\Gamma$ and a propositional formula $\psi$ with $\psi \not \in \supp_\varphi(\Gamma)$, then there exists $\Delta \in [\Gamma]_\leftrightsquigarrow$ such that $\{\neg \psi\} \cup \supp_\varphi(\Gamma) \subseteq \Delta$.
\end{lemma}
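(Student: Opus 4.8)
The plan is to reduce the lemma to a single consistency claim and then invoke Lindenbaum's lemma. Let $T := \Gamma^\supp \cup \{\neg\sigma : \sigma \text{ a support formula}, \sigma \notin \Gamma\}$ be the full ``signed support theory'' of $\Gamma$; note that $T \subseteq \Gamma$ by maximality. I would show that
\[ \Sigma := \{\neg\psi\} \cup \supp_\varphi(\Gamma) \cup T \]
is consistent. Granting this, any MCS $\Delta \supseteq \Sigma$ obtained by Lindenbaum's lemma does the job: it contains $\neg\psi$ and all of $\supp_\varphi(\Gamma)$ directly, while $T \subseteq \Delta$ forces $\Delta^\supp = \Gamma^\supp$, since every support formula of $\Gamma$ lies in $\Delta$, and the negation of every support formula outside $\Gamma$ lies in $\Delta$, so by consistency no such formula enters $\Delta$. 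Hence $\Delta \in [\Gamma]_\leftrightsquigarrow$, as required.

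For the consistency claim I would argue by contradiction using compactness. If $\Sigma$ were inconsistent, some finite subset would be, so there are $\chi_1,\dots,\chi_k \in \supp_\varphi(\Gamma)$ (all propositional, being consequents of $\supp$) and signed support formulas $L_1,\dots,L_m \in T$ with $\vdash (L_1 \land \dots \land L_m) \to (C \to \psi)$, where $C := \chi_1 \land \dots \land \chi_k$. Since each $\varphi \supp \chi_i \in \Gamma$, repeated use of the derivable axiom $\mathbf{AND}$ (Theorem~\ref{teo:Minimal_System}) together with deductive closure gives $\varphi \supp C \in \Gamma$. The formula $C \to \psi$ is propositional, so I can feed the displayed implication into the rule $\mathbf{S5_F}$, obtaining $\vdash (L_1 \land \dots \land L_m) \to (\neg(C \to \psi) \supp \bot)$; as every $L_j \in T \subseteq \Gamma$, modus ponens inside the (deductively closed) MCS yields $\neg(C \to \psi) \supp \bot \in \Gamma$. (When $m = 0$ the premise is simply $\vdash C \to \psi$, and the translated Necessitation rule for $\Box$, derivable from $\mathbf{RCK}$ and $\mathbf{ID}$ by Theorem~\ref{teo:connectionF}, delivers the same conclusion.)

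It then remains to push $\varphi \supp \psi$ back into $\Gamma$. From $\neg(C \to \psi) \supp \bot \in \Gamma$, the derivable translation of $\mathbf{Nec}$ (Theorem~\ref{teo:connectionF}) gives $\varphi \supp (C \to \psi) \in \Gamma$; combining this with $\varphi \supp C \in \Gamma$ through the derivable translation of $\mathbf{COK}$ yields $\varphi \supp \psi \in \Gamma$. This contradicts the hypothesis $\psi \notin \supp_\varphi(\Gamma)$, so $\Sigma$ is consistent and the lemma follows.

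I expect the core difficulty to be exactly the step that converts a \emph{propositional} consequence of the signed support formulas of $\Gamma$ into a support statement: this is where the non-standard rule $\mathbf{S5_F}$ is indispensable, since it makes $\supp$ act ``locally like an absolute operator'' and is precisely what allows the signed support theory $T$ to be carried along inside $\Sigma$ without breaking consistency. The remaining ingredients, namely that $T \subseteq \Gamma$ (so its members are available for modus ponens inside the MCS) and the collapse of the whole $\supp_\varphi(\Gamma)$ part to the single formula $\varphi \supp C$ via $\mathbf{AND}$, are routine supporting steps.
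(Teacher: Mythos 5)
Your proposal is correct and follows essentially the same route as the paper: you build the identical set (the paper's $A$ is your $\Sigma$), prove its consistency by contradiction via the deduction theorem, $\mathbf{S5_F}$, and the translated $\mathbf{Nec}$, and then extend by Lindenbaum. The only cosmetic differences are bookkeeping at the end (you bundle $\supp_\varphi(\Gamma)$ into $\varphi \supp C$ with $\mathbf{AND}$ and close with $\mathbf{COK}$, where the paper uses $\mathbf{RW}$) and your explicit treatment of the $m=0$ edge case via Necessitation, which the paper leaves implicit.
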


\begin{proof}
We show the consistency of the set $A:=\{\neg \psi\} \cup \supp_\varphi(\Gamma) \cup \Gamma^\supp \cup \{\neg(\chi \supp \gamma): \chi \supp \gamma \not\in \Gamma\}.$  If this holds, we can extend the set to an MCS $\Delta$ which, by construction, is contained in $[\Gamma]_\leftrightsquigarrow$. We prove the consistency of $A$ by contradiction. Assume $A \vdash \bot$. Hence, we can find $$\varphi_1, ..., \varphi_n \in \supp_\varphi(\Gamma), \pi_1 \supp \psi_1, ..., \pi_m \supp \psi_m \in \Gamma^\supp$$ and $\neg(\chi_1 \supp \gamma_1), ..., \neg(\chi_k \supp \gamma_k) \in \Gamma$ such that $\alpha \land \varphi_1 \land ... \land \varphi_n \land \neg \psi \vdash \bot$ with $$\alpha:= \pi_1 \supp \psi_1 \land ... \land \pi_m \supp \psi_m \land \neg(\chi_1 \supp \gamma_1) \land ... \land \neg(\chi_k \supp \gamma_k).$$

The $\mathbf{CL}$ axioms and the deduction theorem yield $\vdash \alpha \rightarrow ((\varphi_1 \land \dots \land \varphi_n) \rightarrow \psi).$
From $\mathbf{S5_F}$ we get $$ \vdash \alpha \rightarrow (\neg((\varphi_1 \land ... \land \varphi_n) \rightarrow \psi) \supp \bot)$$ and then $\vdash \alpha \rightarrow (\varphi \supp ((\varphi_1 \land ... \land \varphi_n) \rightarrow \psi))$ with the help of \textbf{Nec}. This leads to  $\psi \in \supp_\varphi(\Gamma)$ because of \textbf{RW}, a contradiction to our assumption.
\end{proof}

We define the states and preference relations $\succeq_\Gamma$ for our canonical model (the index $
\Gamma$ is a representative of an equivalence class of the equivalence relation $\leftrightsquigarrow$). The states are $(\Delta, \varphi, i)$ where $\Delta$ is a MCS in the equivalence class $[\Gamma]_\leftrightsquigarrow$,  $i \in \{0, 1, 2\}$, and $\varphi$ is a propositional formula. 
 $\varphi$ and $i$
are used to pinpoint the maximal $\varphi$-states according to the relation $\succeq_\Gamma$ and to ensure that the maximal elements of $\succeq_\Gamma$ coincide with the states satisfying the supported formulas within $\Gamma$, see 
Cor.~\ref{cor:iff}.

For a MCS $\Gamma$ we use the following notation:
$$S_\Gamma:= [\Gamma]_\leftrightsquigarrow \times \ForCL \times \{0,1,2\}
\quad \mbox{ and } \quad
[\delta]_\Gamma:= \{(\Delta, \varphi, i) \in S_\Gamma: \delta \in \Delta\}.$$

\begin{definition}\label{PrefRel}
    The preference relation $\succeq_\Gamma \subseteq S_\Gamma \times S_\Gamma$ is defined as follows:
    
    $(\Delta, \varphi, i) \succeq_\Gamma (\Omega, \psi, j)$ holds if and only if at least one of the following conditions holds:
    \begin{itemize}
        \item $\Delta$ is $\varphi$-likely for $\Gamma$ and $\varphi \in \Omega$
        \item ($i=1$ and $j=0$) or ($i=2$ and $j=1$) or ($i=0$ and $j=2$)
    \end{itemize}
\end{definition}

After having defined our preference relations we show that the maximal elements in $[\delta]_\Gamma$ are $\delta \text{-likely for } \Gamma$. This is what we are aiming for as we want all the elements in $\mathit{max}([\delta]_\Gamma)$ to fulfil every formula supported by $\delta$ in $\Gamma$. Furthermore, if an MCS $\Delta \in S_\Gamma$ is $\delta \text{-likely for } \Gamma$ then $(\Delta, \delta, i)$ is a maximal element in $[\delta]_\Gamma$. These results will be the core of our completeness proof since they can be used to show that a support formula is true in a state of $S_\Gamma$ if and only if the formula appears in $\Gamma$. To prove this, we start proving the following technical lemma that establishes a connection between an $(\Delta, \varphi, i)$ appearing in $\mathit{max}([\delta]_\Gamma)$ and the MCS $\Gamma$.

\begin{lemma}
\label{lemma:connection}
   Given $(\Delta, \varphi, i) \in \mathit{max}([\delta]_\Gamma)$ then:
\[\begin{array}{ll}
 (a) & \Delta \textit{\ is\ } \varphi\textit{-likely\ for\ } \Gamma; \\ 
 (b) & \neg (\delta \rightarrow \varphi) \supp \bot \in \Gamma.
\end{array}\]
\end{lemma}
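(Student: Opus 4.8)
The plan is to prove both parts by exploiting the two clauses of Definition~\ref{PrefRel} separately: the index-cycle on $\{0,1,2\}$ coming from the second clause, and the $\varphi$-likeliness coming from the first. Throughout, recall that $(\Delta,\varphi,i)\in[\delta]_\Gamma$ means $\delta\in\Delta$, and that the second clause orders the third coordinates as a three-element tournament, $1\succ 0$, $2\succ 1$, $0\succ 2$: for any two distinct indices exactly one dominates the other, and no index dominates itself. For each $i$ write $i^+$ for its unique cycle-successor ($0^+=1$, $1^+=2$, $2^+=0$), so that $i^+\succ i$ while $i\not\succ i^+$.

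For $(a)$ I would first note that, since $\delta\in\Delta$, the triple $(\Delta,\varphi,i^+)$ again lies in $[\delta]_\Gamma$, and $(\Delta,\varphi,i^+)\succeq_\Gamma(\Delta,\varphi,i)$ holds by the second clause. Maximality of $(\Delta,\varphi,i)$ then forces $(\Delta,\varphi,i)\succeq_\Gamma(\Delta,\varphi,i^+)$. The second clause cannot account for this (since $i\not\succ i^+$), so the first clause must hold between these two triples; as their MCS- and formula-coordinates agree, this says precisely that $\Delta$ is $\varphi$-likely for $\Gamma$ (and, incidentally, $\varphi\in\Delta$), which is $(a)$.

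For $(b)$ I would argue by contradiction, assuming $\neg(\delta\rightarrow\varphi)\supp\bot\notin\Gamma$, and claim that the set
$$A:=\{\delta\wedge\neg\varphi\}\cup\Gamma^\supp\cup\{\neg(\chi\supp\gamma):\chi\supp\gamma\notin\Gamma\}$$
is consistent. If it were not, then exactly as in the proof of Lemma~\ref{lem:extension} I would extract a finite conjunction $\alpha$ of (possibly negated) support formulas, all lying in $\Gamma$, with $\alpha\wedge(\delta\wedge\neg\varphi)\vdash\bot$; the deduction theorem and $\mathbf{CL}$ give $\vdash\alpha\rightarrow(\delta\rightarrow\varphi)$, and one application of $\mathbf{S5_F}$ yields $\vdash\alpha\rightarrow(\neg(\delta\rightarrow\varphi)\supp\bot)$. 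Since $\Gamma$ is an MCS containing every conjunct of $\alpha$, it contains $\alpha$ and hence $\neg(\delta\rightarrow\varphi)\supp\bot$, contradicting the assumption. Thus $A$ is consistent; I extend it to an MCS $\Omega$. By construction $\Omega^\supp=\Gamma^\supp$, so $\Omega\in[\Gamma]_\leftrightsquigarrow$, while $\delta\wedge\neg\varphi\in\Omega$ gives $\delta\in\Omega$ and $\varphi\notin\Omega$. Now consider $(\Omega,\delta,i^+)$: it lies in $[\delta]_\Gamma$ since $\delta\in\Omega$, and $(\Omega,\delta,i^+)\succeq_\Gamma(\Delta,\varphi,i)$ by the second clause ($i^+\succ i$). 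Conversely $(\Delta,\varphi,i)\succeq_\Gamma(\Omega,\delta,i^+)$ fails: the second clause is unavailable ($i\not\succ i^+$), and the first clause would require $\varphi\in\Omega$, which is false. Hence $(\Omega,\delta,i^+)$ is strictly above $(\Delta,\varphi,i)$, contradicting $(\Delta,\varphi,i)\in\mathit{max}([\delta]_\Gamma)$; this establishes $(b)$.

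The tournament bookkeeping and the MCS extension are routine, and it is worth stressing that $(b)$ needs only a witness MCS $\Omega$ containing $\delta\wedge\neg\varphi$ — no $\delta$-likeliness is required, because the index-cycle alone lifts $\Omega$ above $(\Delta,\varphi,i)$. The one step that needs genuine care is the consistency of $A$, where $\mathbf{S5_F}$ is indispensable: it is exactly what converts the propositional entailment $\alpha\rightarrow(\delta\rightarrow\varphi)$ into the support statement $\neg(\delta\rightarrow\varphi)\supp\bot$ that produces the contradiction, mirroring its role in Lemma~\ref{lem:extension}. The only technicality is to keep the conjunction $\alpha$ non-empty so that $\mathbf{S5_F}$ applies, which is harmless since $\top\supp\top\in\Gamma$ by $\mathbf{ID}$ can always be adjoined.
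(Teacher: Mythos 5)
Your proof is correct and takes essentially the same approach as the paper: part (a) is the identical index-cycle/maximality argument, and part (b) is the paper's argument run in contrapositive form --- where the paper uses maximality to show every MCS in $[\Gamma]_\leftrightsquigarrow$ containing $\delta$ also contains $\varphi$ and then applies $\mathbf{S5_F}$ to place $\neg(\delta\rightarrow\varphi)\supp\bot$ in $\Gamma$, you assume that formula absent, build a witness $\Omega$ containing $\delta\wedge\neg\varphi$ by the same consistency argument as Lemma~\ref{lem:extension}, and contradict maximality via the cycle. If anything, your explicit inclusion of the negated support formulas $\neg(\chi\supp\gamma)$ in $A$ (which the format of $\mathbf{S5_F}$ accommodates) is slightly more careful than the paper's passage from ``each $\Pi\in[\Gamma]_\leftrightsquigarrow$ contains $\delta\rightarrow\varphi$'' to the inconsistency of $\Gamma^\supp\cup\{\neg(\delta\rightarrow\varphi)\}$, which tacitly needs them.
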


\begin{proof}
We assume $i=0$. The other cases being similar.

(a) Given $(\Delta, \varphi, 0) \in \mathit{max}([\delta]_\Gamma)$, we have $(\Delta,  \varphi, 1) \succeq_\Gamma (\Delta, \varphi, 0)$ by construction of $\succeq_\Gamma$. This implies $(\Delta, \varphi, 0) \succeq_\Gamma (\Delta, \varphi, 1)$ by maximality. The latter only holds if $\Delta$ is $\varphi$-likely for $\Gamma$, and $\varphi \in \Delta$ since no other condition applies.


(b) Assume that there exists a MCS $\Omega \in S_\Gamma$ such that $\delta \in \Omega$ but $\varphi \not\in \Omega$. For $\Omega$ we then have $(\Delta, \varphi, 0) \not\succeq_\Gamma (\Omega, \delta, 1)$, but $(\Omega, \delta, 1) \succeq_\Gamma (\Delta, \varphi, 0)$ which is a contradiction to $(\Delta, \varphi, i) \in \mathit{max}([\delta]_\Gamma)$. Hence, we can infer that such an MCS $\Omega \in S_\Gamma$ does not exist, which means $[\delta]_\Gamma \subseteq [\varphi]_\Gamma$. In other words, every MCS in $S_\Gamma$ contains the formula $\delta \rightarrow \varphi$. More specific each MCS $\Pi$ with $\Pi \in [\Gamma]_\leftrightsquigarrow$ contains the formula $\delta \rightarrow \varphi$, which means $[\Gamma]^\supp \cup \{\neg (\delta \rightarrow \varphi)\}$ is inconsistent. This lets us infer $[\Gamma]^\supp \vdash \delta \rightarrow \varphi$, hence we can find finitely many support formulas in $\Gamma^\supp$ such that $(\varphi_1 \supp \psi_1 \land \dots \land \varphi_n \supp \psi_n)  \vdash \delta \rightarrow \varphi$. 
By applying the deduction theorem and $\mathbf{S5_F}$ we get $\vdash (\varphi_1 \supp \psi_1 \land \dots \land \varphi_n \supp \psi_n)  \rightarrow (\neg (\delta \rightarrow \varphi) \supp \bot)$ and finally $\neg (\delta \rightarrow \varphi) \supp \bot \in \Gamma$.
\end{proof}

\begin{corollary}\label{cor:mac-com}
  Given $(\Delta, \varphi, i) \in S_\Gamma$ then
\[\begin{array}{ll}
  (a) & (\Delta, \varphi, i) \in \mathit{max}([\delta]_\Gamma) \text{\ implies\ } \Delta \text{\ is\ } \delta \text{-likely\ for\ } \Gamma; \\
  (b) & \Delta \text{\ being\ } \delta \text{-likely\ for\ } \Gamma \text{\ implies\ } (\Delta, \delta, i) \in \mathit{max}([\delta]_\Gamma).
\end{array}\]
\end{corollary}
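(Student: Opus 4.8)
The plan is to prove the two implications separately, leaning on Lemma~\ref{lemma:connection} for part~(a) and directly on Definition~\ref{PrefRel} together with axiom \textbf{ID} for part~(b). Throughout I use the ambient hypothesis $(\Delta,\varphi,i)\in S_\Gamma$, so that $\Delta\in[\Gamma]_\leftrightsquigarrow$ and $i\in\{0,1,2\}$.

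Part~(b) is the quick one. Assume $\Delta$ is $\delta$-likely for $\Gamma$, i.e. $\supp_\delta(\Gamma)\subseteq\Delta$. I first check that $(\Delta,\delta,i)\in[\delta]_\Gamma$: since $\delta\supp\delta\in\Gamma$ by \textbf{ID}, we have $\delta\in\supp_\delta(\Gamma)\subseteq\Delta$, hence $\delta\in\Delta$, and $(\Delta,\delta,i)\in S_\Gamma$. Membership in $\mathit{max}([\delta]_\Gamma)$ is then immediate, because $(\Delta,\delta,i)$ is in fact a maximum of $[\delta]_\Gamma$: for every $(\Omega,\psi,j)\in[\delta]_\Gamma$ we have $\delta\in\Omega$, so the first clause of Definition~\ref{PrefRel} (instantiated with $\Delta$ being $\delta$-likely for $\Gamma$ and $\delta\in\Omega$) yields $(\Delta,\delta,i)\succeq_\Gamma(\Omega,\psi,j)$. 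Since $(\Delta,\delta,i)$ dominates every element of $[\delta]_\Gamma$, it trivially satisfies the maximality condition and so lies in $\mathit{max}([\delta]_\Gamma)$.

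For part~(a), assume $(\Delta,\varphi,i)\in\mathit{max}([\delta]_\Gamma)$; the goal is $\supp_\delta(\Gamma)\subseteq\Delta$. Lemma~\ref{lemma:connection} supplies two facts: $\Delta$ is $\varphi$-likely for $\Gamma$ (so $\supp_\varphi(\Gamma)\subseteq\Delta$), and $\neg(\delta\to\varphi)\supp\bot\in\Gamma$. Fix $\psi$ with $\delta\supp\psi\in\Gamma$; I must show $\psi\in\Delta$. The crucial move is to replace $\delta$ by $\varphi\land\delta$ in the left argument of the conditional. As $\neg(\delta\to\varphi)$ and $\neg(\delta\IFF(\varphi\land\delta))$ are propositionally equivalent, \textbf{LLE} turns $\neg(\delta\to\varphi)\supp\bot\in\Gamma$ into $\neg(\delta\IFF(\varphi\land\delta))\supp\bot\in\Gamma$; axiom \textbf{LL+} then gives $(\delta\supp\psi)\leftrightarrow((\varphi\land\delta)\supp\psi)$, whence $(\varphi\land\delta)\supp\psi\in\Gamma$. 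Now \textbf{SH} (with its schema variables instantiated by $\varphi$, $\delta$, $\psi$) yields $\varphi\supp(\delta\to\psi)\in\Gamma$, i.e. $\delta\to\psi\in\supp_\varphi(\Gamma)\subseteq\Delta$. Finally $\delta\in\Delta$ (because $(\Delta,\varphi,i)\in[\delta]_\Gamma$), so closure of the MCS $\Delta$ under \textbf{MP} delivers $\psi\in\Delta$, as required.

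The propositional bookkeeping and all of part~(b) are routine. The one genuinely delicate step is the passage from $\delta\supp\psi$ to $(\varphi\land\delta)\supp\psi$ in part~(a): one has to recognise that \textbf{LLE} alone will not do the job (since $\delta\leftrightarrow(\varphi\land\delta)$ is not a theorem), and that the \emph{conditional} left-logical-equivalence \textbf{LL+}, fed precisely by the second conclusion $\neg(\delta\to\varphi)\supp\bot\in\Gamma$ of Lemma~\ref{lemma:connection}, is exactly what licenses the substitution and thereby makes \textbf{SH} applicable.
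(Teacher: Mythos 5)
Your proof is correct and follows essentially the same route as the paper's: part (b) via \textbf{ID} and the first clause of Definition~\ref{PrefRel}, and part (a) via Lemma~\ref{lemma:connection}, the \textbf{LL+}-licensed replacement of $\delta$ by $\varphi\land\delta$, then \textbf{SH}, $\varphi$-likeliness, and modus ponens with $\delta\in\Delta$. The only (immaterial) difference is that you obtain $\neg(\delta \IFF (\delta\land\varphi))\supp\bot\in\Gamma$ by applying \textbf{LLE} to the propositional equivalence $\neg(\delta\to\varphi)\leftrightarrow\neg(\delta\IFF(\delta\land\varphi))$, whereas the paper derives it using $\mathbf{K_\Box}$ and Necessitation for $\Box$ with the tautology $(\delta\to\varphi)\to(\delta\IFF(\delta\land\varphi))$.
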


\begin{proof}

(a) By $ (\Delta, \varphi, i) \in \mathit{max}([\delta]_\Gamma)$ and Lemma~\ref{lemma:connection} we derive $\neg (\delta \rightarrow \varphi) \supp \bot \in \Gamma$. Using the  derivable axiom $\mathbf{K_\Box}$ and Necessitation for $\Box$ (see Theorem~\ref{teo:connectionF}) we get $\neg (\delta \leftrightarrow (\delta \land \varphi)) \supp \bot \in \Gamma$ (using the classical tautology  $(\delta \rightarrow \varphi)  \rightarrow (\delta \leftrightarrow (\delta \land \varphi))$). Let us take an arbitrary $\gamma \in \supp_\delta(\Gamma)$, this means $\delta \supp \gamma \in \Gamma$. Since $\neg (\delta \leftrightarrow (\delta \land \varphi)) \supp \bot \in \Delta$ we can apply  \textbf{LL+} to derive $(\delta \land \varphi) \supp \gamma \in \Delta$. Furthermore, by applying \textbf{SH}, we get $\varphi \supp \delta \rightarrow \gamma \in \Delta$. By 
Lemma~\ref{lemma:connection} we know that  $\Delta \text{ is } \varphi \text{-likely for } \Gamma$, which means $\supp_\varphi(\Gamma) \subseteq \Delta$ and therefore $ \delta \rightarrow \gamma \in \Delta$. By assumption, we have  $\delta \in \Delta$, which lets us conclude  $\gamma \in \Delta$. Since $\gamma$ was arbitrary we get $ \supp_\delta(\Gamma) \subseteq \Delta$. 

(b) Since $\Delta$ is $\delta$-likely for $\Gamma$ by axiom {\bf ID} we have $\delta \in \supp_\delta(\Gamma) \subseteq \Delta$ and therefore $\Delta \in [\delta]_\Gamma$.
If we take an arbitrary MCS $\Omega \in S_\Gamma$ with $\Omega \in [\delta]_\Gamma$ and an arbitrary propositional formula $\pi$ we end up with $(\Delta, \delta, i) \succeq_\Gamma (\Omega, \pi, j)$ since the first point in the definition of $\succeq_\Gamma$ is fulfilled. 
\end{proof}


We can finally show that our construction works as intended, namely that every formula $\psi$ supported by a formula $\delta$ according to a MCS $\Gamma$ is contained in all the maximal $\delta$ states in the equivalence set of $\Gamma$.

\begin{corollary}\label{cor:iff}
       Given a MCS $\Gamma$ and two propositional formulas $\delta$ and $\psi$, it holds that
       $$\delta \supp \psi \in \Gamma  \text{ if and only if } \; \forall \, (\Delta, \varphi, i) \in \mathit{max}([\delta]_\Gamma): \psi \in \Delta $$
\end{corollary}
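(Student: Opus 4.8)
The plan is to derive this biconditional directly from the two preceding results, Corollary~\ref{cor:mac-com} and Lemma~\ref{lem:extension}, so that essentially no fresh work is required; the genuine technical content has already been front-loaded into those statements. I would split the two directions, handling the right-to-left implication by contraposition.

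For the forward direction, assume $\delta \supp \psi \in \Gamma$, i.e.\ $\psi \in \supp_\delta(\Gamma)$ by the definition of $\supp_\delta(\Gamma)$. Fix an arbitrary $(\Delta, \varphi, i) \in \mathit{max}([\delta]_\Gamma)$; if this set is empty, the universal claim holds vacuously. By Corollary~\ref{cor:mac-com}(a), maximality of $(\Delta, \varphi, i)$ in $[\delta]_\Gamma$ forces $\Delta$ to be $\delta$-likely for $\Gamma$, which by definition means $\supp_\delta(\Gamma) \subseteq \Delta$. Since $\psi \in \supp_\delta(\Gamma)$, we conclude $\psi \in \Delta$, as required.

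For the converse I would argue contrapositively: suppose $\delta \supp \psi \notin \Gamma$, equivalently $\psi \notin \supp_\delta(\Gamma)$, and exhibit a single maximal $\delta$-state omitting $\psi$. Applying Lemma~\ref{lem:extension} with $\varphi := \delta$ yields an MCS $\Delta \in [\Gamma]_\leftrightsquigarrow$ with $\{\neg\psi\} \cup \supp_\delta(\Gamma) \subseteq \Delta$. The inclusion $\supp_\delta(\Gamma) \subseteq \Delta$ says precisely that $\Delta$ is $\delta$-likely for $\Gamma$, so Corollary~\ref{cor:mac-com}(b) places $(\Delta, \delta, i) \in \mathit{max}([\delta]_\Gamma)$ for every choice of $i$. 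Because $\neg\psi \in \Delta$ and $\Delta$ is consistent, $\psi \notin \Delta$; thus $(\Delta, \delta, 0)$ is a maximal $\delta$-state of the equivalence class that fails $\psi$, refuting the right-hand side.

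The crux — and the reason the earlier machinery is needed — is the tight fit between the syntactic set $\supp_\delta(\Gamma)$ and the semantic maximal elements $\mathit{max}([\delta]_\Gamma)$: one inclusion follows almost immediately from the definition of $\succeq_\Gamma$ (Corollary~\ref{cor:mac-com}(b)), while the reverse inclusion rests on the delicate argument of Lemma~\ref{lemma:connection} exploiting $\mathbf{S5_F}$, $\mathbf{LL+}$ and $\mathbf{SH}$ (Corollary~\ref{cor:mac-com}(a)). Given both inclusions, this final corollary is just a definitional unpacking, and I anticipate no real obstacle here; the only points requiring care are that the auxiliary index $i$ ranges correctly and that the vacuous case $\mathit{max}([\delta]_\Gamma) = \emptyset$ (e.g.\ $\delta = \bot$, where $\delta \supp \psi \in \Gamma$ holds for all $\psi$ via $\mathbf{ID}$ and $\mathbf{RW}$) remains consistent with the biconditional.
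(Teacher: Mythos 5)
Your proof is correct and follows essentially the same route as the paper's: the forward direction is exactly the appeal to Corollary~\ref{cor:mac-com}(a) to get that $\Delta$ is $\delta$-likely for $\Gamma$ and hence $\psi \in \supp_\delta(\Gamma) \subseteq \Delta$, and the converse is the same contrapositive argument invoking Lemma~\ref{lem:extension} (with $\varphi := \delta$) to build $\Delta \supseteq \{\neg\psi\} \cup \supp_\delta(\Gamma)$ and then Corollary~\ref{cor:mac-com}(b) to place $(\Delta, \delta, i)$ in $\mathit{max}([\delta]_\Gamma)$. Your additional remarks on the vacuous case $\mathit{max}([\delta]_\Gamma) = \emptyset$ and the role of the index $i$ are correct but not needed beyond what the paper states.
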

\begin{proof}
$(\Rightarrow)$ Given $(\Delta, \varphi, i) \in \mathit{max}([\delta]_\Gamma)$ we can derive that $\Delta \text{ is } \delta \text{-likely for } \Gamma$ via Corollary~\ref{cor:mac-com}. By assumption we get $\psi \in \supp_\delta(\Gamma) \subseteq \Delta$. 

$(\Leftarrow)$ By contraposition. Assume $\psi \not\in \supp_\delta(\Gamma)$. By Lemma~\ref{lem:extension} we infer that there exists a MCS $\Delta \in [\Gamma]_\leftrightsquigarrow$ such that $\{\neg \psi\} \cup \supp_\delta(\Gamma) \subseteq \Delta$. By construction $\Delta$ is $\delta \text{-likely for } \Gamma$. Corollary~\ref{cor:mac-com} gives us $(\Delta, \varphi, i) \in \mathit{max}([\delta]_\Gamma)$. Since $\neg\psi \in \Delta$ we get $\psi \not\in \Delta$ by consistency.
\end{proof}

       
       


Now, we proceed to define the canonical model. To begin, let us fix $I$ as a set consisting of one representative of each equivalent class of $\leftrightsquigarrow$. 

\begin{definition}[Canonical model]\label{CanMod}
Let $\mathcal{M}^{Can}:=\langle S, (S_\Gamma)_{\Gamma \in I}, (\succeq_\Gamma)_{\Gamma \in I}, R, V \rangle$ where:
\begin{itemize}
    \item $S:=\bigcup_{\Gamma \in I} S_\Gamma$;
    \item $V(p):= \{(\Delta, \varphi, i) \in S: p \in \Delta \}$;
    \item $\succeq_\Gamma \subseteq S_\Gamma \times S_\Gamma$ is defined as in Definition~\ref{PrefRel};
    \item $R \subseteq S \times S$ is defined as $(\Delta, \varphi, i) R (\Omega, \psi, j)$ if for all $\alpha \in \ForT: ~ (B(\alpha) \in \Delta \Rightarrow \alpha \in \Omega)$.
\end{itemize}

\end{definition}

We now begin the final steps of our completeness proof. First, we present the truth lemma, which states that a formula is true at a state in the canonical model if and only if the formula is an element of the maximal consistent set of this state. Following this, we demonstrate that $\mathcal{M}^{Can}$ is a Trust model.

\begin{lemma}[Truth lemma]\label{lem:truth-lemma}
    $\mathcal{M}^{Can}, (\Delta, \pi, i) \models \alpha$ iff $\alpha \in \Delta$.
\end{lemma}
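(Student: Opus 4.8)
The plan is to prove the Truth Lemma by induction on the structure of the formula $\alpha$, following the standard Henkin-style argument but adapting each modal case to the peculiarities of the canonical model. The base case ($\alpha$ a propositional variable) holds immediately by the definition of $V$, and the boolean cases ($\neg$, $\land$, and the derived connectives) follow routinely from the induction hypothesis together with the fact that each $\Delta$ is a maximal consistent set (so it is closed under $\mathbf{CL}$ and contains exactly one of $\beta$, $\neg\beta$).

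\noindent\textbf{The support case.} The heart of the argument is the case $\alpha = \delta \supp \psi$ with $\delta,\psi \in \ForCL$. Here I would invoke Corollary~\ref{cor:iff} directly. By the semantics, $\mathcal{M}^{Can}, (\Delta,\pi,i) \models \delta \supp \psi$ iff $\mathit{most}(||\delta||_\Gamma) \subseteq ||\psi||_\Gamma$, where $\Gamma$ is the representative of the equivalence class of the state, i.e.\ iff every maximal $\delta$-state satisfies $\psi$. Since $\delta$ and $\psi$ are propositional, the induction hypothesis (applied to the subformulas $\delta$ and $\psi$, whose truth at a state $(\Delta',\varphi',j)$ reduces to membership in $\Delta'$) lets me rewrite $\mathit{most}(||\delta||_\Gamma)$ as $\mathit{max}([\delta]_\Gamma)$ and the condition $(\Delta',\varphi',j)\models\psi$ as $\psi\in\Delta'$. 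Corollary~\ref{cor:iff} then gives exactly $\delta\supp\psi \in \Gamma$, and because the state lies in the equivalence class $[\Gamma]_\leftrightsquigarrow$ and support formulas are shared within a class ($\Gamma^\supp = \Delta^\supp$), this is equivalent to $\delta\supp\psi \in \Delta$. One must also check the reading of $\mathit{most}$ matches $\mathit{max}$ under $\succeq_\Gamma$, which is exactly what the preceding corollaries were engineered to guarantee.

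\noindent\textbf{The belief case.} For $\alpha = B(\beta)$ I would argue as in the standard modal completeness proof for $\KDf$. For the forward direction, if $B(\beta)\in\Delta$ then by the definition of $R$ every $R$-successor $\Omega$ contains $\beta$, so by the induction hypothesis $\beta$ holds at each successor, giving $\mathcal{M}^{Can},(\Delta,\pi,i)\models B(\beta)$. For the converse I would use the usual existence lemma: if $B(\beta)\notin\Delta$, then the set $\{\gamma : B(\gamma)\in\Delta\}\cup\{\neg\beta\}$ is consistent (otherwise finitely many $B(\gamma_k)$ would yield $\vdash (\gamma_1\land\dots\land\gamma_n)\to\beta$, whence $B(\beta)\in\Delta$ via $\mathbf{KB}$ and $\mathbf{NB}$), so it extends to an MCS $\Omega$ with $\Delta R \Omega$ and $\neg\beta\in\Omega$, refuting $B(\beta)$ at the state by the induction hypothesis. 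A subtlety here is that the successor $\Omega$ must be realized as an actual state $(\Omega,\psi,j)\in S$; since $S$ contains $(\Omega,\psi,j)$ for every MCS $\Omega$, every propositional $\psi$ and every $j\in\{0,1,2\}$, any triple built from $\Omega$ serves, and the $R$-relation depends only on the first coordinates.

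\noindent The step I expect to be the main obstacle is the support case, specifically the bookkeeping that aligns the semantic object $\mathit{most}(||\delta||_\Gamma)$ with the syntactic object $\mathit{max}([\delta]_\Gamma)$ over which Corollary~\ref{cor:iff} quantifies. I must make sure the induction hypothesis is legitimately applicable to $\delta$ and $\psi$ (they are strictly simpler, and being propositional their truth is class-independent), and that the index $\Gamma$ attached to the partition class of the current state is the same one used throughout the corollaries. The remaining care concerns confirming that $\mathcal{M}^{Can}$ is genuinely a Trust model — in particular the limitedness condition — but that verification is the content of the next result, so within the Truth Lemma I only need the structural definitions of $S$, $\succeq_\Gamma$ and $R$ already in place.
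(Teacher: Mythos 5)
Your proposal is correct and follows essentially the same route as the paper's proof: structural induction with the propositional base case identifying $\mathit{most}(||\delta||_\Gamma)$ with $\mathit{max}([\delta]_\Gamma)$, the support case handled by Corollary~\ref{cor:iff} together with the class-invariance of support formulas, and the belief case by the standard Kripke argument. If anything, you are more explicit than the paper, which compresses the $B(\beta)$ case into ``follows directly from the induction hypothesis and the construction of $R$'' while your spelled-out existence lemma (consistency of $\{\gamma : B(\gamma)\in\Delta\}\cup\{\neg\beta\}$ via $\mathbf{KB}$ and $\mathbf{NB}$, realized as a state since every MCS occurs in some $S_{\Gamma}$) is exactly the detail being elided there.
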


\begin{proof}
 Proceeds, as usual,  by structural induction on the formula $\alpha$. See appendix.
\end{proof}

\begin{lemma}\label{lem:can-trust-model}
    $\mathcal{M}^{Can}$ in Definition~\ref{CanMod} is a Trust model.
\end{lemma}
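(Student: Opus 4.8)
The plan is to verify, one by one, the defining clauses of a Trust model (Definition~\ref{def:model}) for $\mathcal{M}^{Can}$, and to spend essentially all of the effort on the limitedness condition, which is the only nontrivial requirement. The structural clauses are immediate: $V(p)\subseteq S$ holds by definition, so $V$ is a valuation; $\succeq_\Gamma\subseteq S_\Gamma\times S_\Gamma$ holds by Definition~\ref{PrefRel}, so each $\langle S_\Gamma,\succeq_\Gamma\rangle$ is a preference frame (no property on $\succeq_\Gamma$ being demanded, cf.\ Remark~\ref{rem:Models}); and $(S_\Gamma)_{\Gamma\in I}$ is a partition because $S=\bigcup_{\Gamma\in I}S_\Gamma$ by construction, while $I$ selects exactly one representative per $\leftrightsquigarrow$-class and $S_\Gamma=[\Gamma]_\leftrightsquigarrow\times\ForCL\times\{0,1,2\}$, so distinct $\Gamma,\Gamma'\in I$ yield disjoint classes and hence disjoint $S_\Gamma,S_{\Gamma'}$; every MCS lies in exactly one class, so each triple lies in exactly one $S_\Gamma$.

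Next I would check that $\langle S,R\rangle$ is serial and transitive by the standard canonical-model arguments for $\KDf$, observing that $R$ depends only on the MCS-component of a state. For seriality, given $(\Delta,\varphi,i)$ I show $\{\alpha:B(\alpha)\in\Delta\}$ is consistent: otherwise $\vdash(\alpha_1\land\dots\land\alpha_n)\to\bot$ for some $B(\alpha_k)\in\Delta$, and then $\textbf{NB}$ together with $\textbf{KB}$ force $B(\bot)\in\Delta$, contradicting $\textbf{DB}$ applied to the derivable $B(\top)\in\Delta$; extending the consistent set to an MCS $\Omega$ provides an $R$-successor $(\Omega,\psi,j)$, which lies in $S$ since every MCS sits in some class. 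For transitivity, if $(\Delta,\cdot,\cdot)\,R\,(\Omega,\cdot,\cdot)\,R\,(\Theta,\cdot,\cdot)$ and $B(\alpha)\in\Delta$, then $\textbf{4B}$ gives $B(B(\alpha))\in\Delta$, whence $B(\alpha)\in\Omega$ and finally $\alpha\in\Theta$.

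For limitedness, the Truth Lemma (Lemma~\ref{lem:truth-lemma}) lets me identify, for propositional $\varphi$, the set $||\varphi||_\Gamma$ with $[\varphi]_\Gamma$ and $\mathit{most}(||\varphi||_\Gamma)$ with $\mathit{max}([\varphi]_\Gamma)$. Assuming $[\varphi]_\Gamma\neq\emptyset$, fix a witness $\Delta_0\in[\Gamma]_\leftrightsquigarrow$ with $\varphi\in\Delta_0$. The decisive step is to invoke $\textbf{ST}$ to exclude $\varphi\supp\bot\in\Gamma$: if it held, then $\varphi\supp\bot\in\Gamma^\supp=\Delta_0^\supp\subseteq\Delta_0$, so $\textbf{ST}$ would force $\neg\varphi\in\Delta_0$, contradicting $\varphi\in\Delta_0$. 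Hence $\bot\notin\supp_\varphi(\Gamma)$, so Lemma~\ref{lem:extension} applied with $\psi:=\bot$ yields $\Delta\in[\Gamma]_\leftrightsquigarrow$ with $\supp_\varphi(\Gamma)\subseteq\Delta$, i.e.\ $\Delta$ is $\varphi$-likely for $\Gamma$. By $\textbf{ID}$ we have $\varphi\in\supp_\varphi(\Gamma)\subseteq\Delta$, so $(\Delta,\varphi,i)\in[\varphi]_\Gamma$, and Corollary~\ref{cor:mac-com}(b) gives $(\Delta,\varphi,i)\in\mathit{max}([\varphi]_\Gamma)$. Thus $\mathit{most}(||\varphi||_\Gamma)\neq\emptyset$, as required.

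The main obstacle is exactly this limitedness clause: from the mere inhabitation of $||\varphi||_\Gamma$ one must manufacture a genuine $\succeq_\Gamma$-maximal $\varphi$-state. The delicate point is recognizing that nonemptiness of $[\varphi]_\Gamma$ propagates, through the support formulas shared across the $\leftrightsquigarrow$-class, into a syntactic ban on $\varphi\supp\bot$ via $\textbf{ST}$; the auxiliary indices $\{0,1,2\}$ built into the states (through Definition~\ref{PrefRel} and Lemma~\ref{lemma:connection}) and Corollary~\ref{cor:mac-com}(b) then convert the $\varphi$-likely extension supplied by Lemma~\ref{lem:extension} into an actual maximal element. The remaining clauses are routine bookkeeping.
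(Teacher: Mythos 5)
Your proof is correct and follows essentially the same route as the paper's: limitedness via \textbf{ST}, Lemma~\ref{lem:extension}, Corollary~\ref{cor:mac-com}(b) and the Truth Lemma; seriality via \textbf{NB}, \textbf{KB}, \textbf{DB}; transitivity via \textbf{4B}. Your explicit instantiation of Lemma~\ref{lem:extension} at $\psi:=\bot$ is a slightly cleaner rendering of the paper's appeal to that lemma, and your verification of the structural clauses (partition, preference frames) makes explicit what the paper leaves implicit.
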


\begin{proof}
First, we prove that for each $\Gamma \in I$ $\langle S_\Gamma , (\succeq_\Gamma), V \rangle$ fulfils the limitedness condition.
Let $\varphi \in \ForCL$ and $\Gamma \in I$ with $[\varphi]_\Gamma\not=\emptyset$. Hence there exists a MCS $\Delta \in [\Gamma]_\leftrightsquigarrow$ with $\varphi \in \Delta$. \textbf{ST} tells us that $\neg(\varphi \supp \bot) \in \Delta$. This implies $\varphi \supp \bot \not\in \Delta$ and finally $\bot \not\in \supp_\varphi(\Gamma)$. Given that the set $\supp_\varphi(\Gamma) $ is closed under consequences because of \textbf{RW}, we can conclude that it is consistent. By 
Lemma~\ref{lem:extension}, we can now extend $\supp_\varphi(\Gamma)$  to a MCS $\Pi \in [\Gamma]_\leftrightsquigarrow$. By construction $\Pi$ is $\varphi \text{-likely for } \Gamma$. Using part (b) of Corollary~\ref{cor:mac-com} we obtain $(\Pi, \varphi, i) \in \mathit{max}([\varphi]_\Gamma)$, which makes $\mathit{max}([\varphi]_\Gamma)$ non-empty. We use Lemma~\ref{lem:truth-lemma} to conclude $||\varphi||_\Gamma \not = \emptyset \Rightarrow \mathit{most}(||\varphi||_\Gamma) \not = \emptyset$. Since $\varphi$ and $\Gamma$ were arbitrary we are done.

Furthermore, we need to show that the relation $R$ is transitive and serial. We will, from now on, write $\Delta R \Omega$ for $(\Delta, \varphi, i) R (\Omega, \psi, j)$ since $R$ only depends on the sets.
{\em Transitivity}: Assume $\Delta R \Omega$ and $\Omega R \Pi$. 
From $B(\alpha) \in \Delta$ by \textbf{4B} we derive $B(B(\alpha)) \in \Delta$. By assumption $\Delta R \Omega$ and the construction in Definition~\ref{CanMod} $B(\alpha) \in \Omega$; the same argument applies for $\Omega R \Pi$ to derive $\alpha \in \Pi$. Since $\alpha$ was arbitrary, we can conclude $\Delta R \Pi$.
{\em Seriality}: Given a MCS $\Delta$ we get $B(\top) \in \Delta$ via the rule of necessitation for $B$. This implies $B(\bot) \not\in \Delta$ because of the axiom \textbf{DB} and the consistency of $\Delta$. Now we take a look at the set $A:=\{\alpha: B(\alpha) \in \Delta\} $. Because of the axiom $\mathbf{KB}$, we know that $A$ is closed under consequences, and since $\bot \not\in A$, we also know $A$ to be consistent. We can, therefore, extend it to a maximal consistent set $\Pi$. By definition $\Delta R \Pi$, which makes $R$ serial.
\end{proof}

With these two lemmas, we are now ready to prove the completeness of \logic. This will be done in the usual manner by showing that for every non-derivable formula, there exists a state in our canonical model where it does not hold.

\begin{theorem}[Completeness]\label{teo:completness}
   Given $\Phi \subseteq \ForT$ and $\alpha \in \ForT$ it holds that:  $\Phi \models \alpha \Rightarrow  \Phi \vdash \alpha$.
\end{theorem}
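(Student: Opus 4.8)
The plan is to argue by contraposition and to let the two freshly proved results carry the weight, so that only routine maximal-consistent-set machinery remains. Concretely, I would assume $\Phi \not\vdash \alpha$ and produce a Trust model with a state satisfying every formula of $\Phi$ but refuting $\alpha$, which witnesses $\Phi \not\models \alpha$. The first step is to observe that $\Phi \cup \{\neg\alpha\}$ is consistent: if it were not, then $\Phi \vdash \neg\neg\alpha$, and by $\mathbf{CL}$ together with $\mathbf{MP}$ we would obtain $\Phi \vdash \alpha$, contradicting the assumption. (Here I rely on the deduction theorem, which holds in \logic.)

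Next I would extend $\Phi \cup \{\neg\alpha\}$ to a MCS in the sense of Definition~\ref{def:MCS} by a Lindenbaum-style construction: fix an enumeration $(\alpha_n)_{n\in\N}$ of $\ForT$, set $\Delta_0 := \Phi \cup \{\neg\alpha\}$, and at stage $n$ adjoin $\alpha_n$ if the resulting set stays consistent and $\neg\alpha_n$ otherwise, finally taking $\Delta := \bigcup_n \Delta_n$. This is the one spot that needs a little care because of the two-layer grammar of $\mathcal{L}$, but the grammar closes $\ForT$ under $\neg$ (so $\neg\alpha_n \in \ForT$ whenever $\alpha_n \in \ForT$); hence every stage stays inside the language, and the standard argument shows $\Delta$ is consistent and satisfies clause (b) of Definition~\ref{def:MCS}.

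With $\Delta$ in hand I place it inside the canonical model. The MCS $\Delta$ lies in exactly one equivalence class of $\leftrightsquigarrow$, and $I$ was fixed to contain one representative $\Gamma$ of each class; thus $\Delta \in [\Gamma]_\leftrightsquigarrow$ for a unique $\Gamma \in I$, so for any $\pi \in \ForCL$ and $i \in \{0,1,2\}$ the triple $(\Delta,\pi,i)$ is a genuine state of $S_\Gamma \subseteq S$. Applying the Truth Lemma (Lemma~\ref{lem:truth-lemma}) at this state gives $\M^{Can},(\Delta,\pi,i) \models \beta$ for every $\beta \in \Phi$, since each such $\beta$ lies in $\Delta$; and since $\neg\alpha \in \Delta$, consistency yields $\alpha \notin \Delta$, whence $\M^{Can},(\Delta,\pi,i) \not\models \alpha$. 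As $\M^{Can}$ is a Trust model by Lemma~\ref{lem:can-trust-model}, this state refutes $\Phi \models \alpha$, completing the contrapositive.

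I expect no serious obstacle, because the substantive content has already been absorbed into the Truth Lemma and the verification that $\M^{Can}$ satisfies seriality, transitivity and limitedness. The only points deserving attention are the Lindenbaum extension under the restricted two-layer language and the bookkeeping ensuring that $\Delta$ genuinely sits in the partition class indexed by its representative $\Gamma \in I$, so that $(\Delta,\pi,i)$ is a legitimate point of the canonical model rather than an artifact of the construction.
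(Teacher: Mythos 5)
Your proposal is correct and follows essentially the same route as the paper: contraposition, extending the consistent set $\Phi \cup \{\neg\alpha\}$ to a MCS $\Delta$, evaluating at a canonical state $(\Delta,\pi,i)$ via the Truth Lemma (Lemma~\ref{lem:truth-lemma}), and invoking Lemma~\ref{lem:can-trust-model} to ensure the countermodel is a genuine Trust model. The only difference is that you spell out details the paper leaves implicit (the Lindenbaum construction under the two-layer grammar and the placement of $\Delta$ in its $\leftrightsquigarrow$-class), both of which you handle correctly.
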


\begin{proof}
By contraposition. From $\Phi \not\vdash \alpha$ follows that $\Phi \cup \{\neg \alpha\}$ is consistent and can therefore be extended to a MCS $\Delta$.  By Lemma~\ref{lem:truth-lemma} every formula in $\Delta$ holds in the canonical model in a state of the form $(\Delta, \gamma, i)$. Hence $\forall \beta \in \Phi: (\Delta, \gamma, i)\models \beta$ and $(\Delta, \gamma, i) \not\models \alpha$, which gives us $\Phi \not\models \alpha$ by Lemma~\ref{lem:can-trust-model}.
\end{proof}

We now discuss the complexity results for \logic. We split the problem into two parts: $(i)$ we reduce \logic\ by ignoring its support part and focusing on the Boolean and belief parts, and then $(ii)$ we reinstate the support part, completing the proof.

\begin{definition}[\logic-reduction]
    The \logic-reduction is obtained by reducing $\ForT$ to $\ForT'$ and transforming a model $\mathcal{M} := \langle S, \_, \_, R, V \rangle$ into a model $\mathcal{M}' := \langle S, R, V'\rangle$ in the following way:
    \begin{itemize}
        \item $\varphi$ formulas of $\ForT$ remain unchanged in $\ForT'$;
        \item All $\alpha\in\ForT$ of the form $\varphi\supp\varphi$, are mapped to novel propositional variables taken from a set $\mathit{Prop}'$, where $\mathit{Prop}' \bigcap \mathit{Prop} = \emptyset$;
        \item All the other $\alpha$ formulas are adjusted accordingly;
        \item $V'$ extends $V$ to also include in its domain $\mathit{Prop}'$, according to the following rule: if $\M, s \models \varphi \supp \psi $ and $p'$ is the propositional variable corresponding to $\varphi \supp \psi$, then $s\in V'(p')$.
    \end{itemize}
\end{definition}



\begin{theorem}\label{dec}
    The decision problem for \logic\ 
    is PSPACE-complete.
\end{theorem}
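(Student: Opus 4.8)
The plan is to prove both directions of PSPACE-completeness by isolating the two ingredients of \logic: the lower bound comes from the belief modality, and the upper bound from the \logic-reduction, which decouples the belief layer from the support layer.

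\textbf{Lower bound.} First I would observe that the fragment of $\ForT$ in which $\supp$ does not occur is exactly the language of the normal modal logic $\KDf$ over $\mathit{Prop}$, and that on Trust models such a formula is evaluated purely through the serial and transitive relation $R$ (Definition~\ref{def:model}). Conversely, any $\KDf$ Kripke model becomes a Trust model by taking a single partition with an arbitrary preference frame, which leaves the truth of $\supp$-free formulas untouched. Hence a $\supp$-free $\alpha$ is \logic-satisfiable iff it is $\KDf$-satisfiable. Since satisfiability for the serial transitive logic $\KDf$ is PSPACE-hard, PSPACE-hardness of \logic\ follows.

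\textbf{Upper bound.} For membership I would use the \logic-reduction to split the decision procedure. Replacing every support subformula $\varphi \supp \psi$ by a fresh atom of $\mathit{Prop}'$ turns the input $\alpha$ into a $\KDf$-formula $\alpha'$ over $\mathit{Prop} \cup \mathit{Prop}'$ and turns a Trust model into a relational model $\M' = \langle S, R, V' \rangle$. The key reduction lemma I would establish is: $\alpha$ is \logic-satisfiable iff $\alpha'$ is satisfiable in some $\KDf$ model whose $\mathit{Prop}'$-labelling is \emph{realizable}, meaning that for every state $s$ the truth pattern of the support atoms at $s$ is induced by an actual preference frame satisfying limitedness and containing a state whose $\mathit{Prop}$-valuation agrees with $s$. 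The soundness of this lemma rests on two facts proved earlier: by Definition~\ref{def:model} the truth of $\varphi \supp \psi$ at a state depends only on the partition $S_i$ containing it, so all states of a partition share the same support type and $R$ imposes no constraint on any $\succeq_i$; and by Theorem~\ref{teo:flat-fragmentF} the realizability of a support type is precisely the satisfiability of the corresponding Boolean combination of support formulas in the flat fragment of $\F$, which can be tested over finite preference frames, where limitedness holds automatically. Because partitions are mutually independent, each state of $\M'$ may be placed in its own partition realizing its type, with the auxiliary states of that partition kept $R$-isolated via reflexive points, so realizability is a purely local, per-state condition. Given the lemma, the algorithm runs Ladner's space-bounded procedure for $\KDf$ on $\alpha'$, exploring the model branch by branch to depth at most $|\alpha|$ while storing only the current branch; whenever it fixes the $\mathit{Prop} \cup \mathit{Prop}'$-type of a state it additionally calls, as a subroutine, a PSPACE test that the induced support type is realizable. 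Since both the $\KDf$ exploration and the realizability test run in polynomial space and PSPACE is closed under such composition, the whole procedure is in PSPACE.

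\textbf{Main obstacle.} I expect the crux to be the reduction lemma, and within it the claim that realizability is genuinely local and decoupled from $R$: one must verify that the preference frames witnessing the support types of distinct states can be chosen independently without disturbing the belief structure, and in particular that the auxiliary states needed to realize a support type can be attached to a partition without creating unwanted $R$-reachable worlds that would falsify belief subformulas. A secondary point requiring care is confirming that the realizability subroutine is itself in PSPACE; this follows from the flat-$\F$ correspondence of Theorem~\ref{teo:flat-fragmentF} together with the observation that a satisfiable Boolean combination of support formulas has a finite preference model indexed by the propositional types over the atoms occurring in $\alpha$, on which the limitedness condition is automatic.
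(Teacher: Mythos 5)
Your proposal is correct and follows essentially the same route as the paper: split the problem via the \logic-reduction, use the known PSPACE-completeness of $\KDf$ satisfiability (serial, transitive models) for the belief layer, and dispose of the support layer through the flat-$\F$ correspondence of Theorem~\ref{teo:flat-fragmentF}. The differences are ones of detail rather than approach: you make explicit the per-state realizability lemma (covering negated support literals, the state's own propositional type, and the $R$-isolation of auxiliary states) that the paper's sketch leaves implicit, and you bound the realizability subroutine via a finite type-indexed preference model, whereas the paper instead cites the NP-completeness of satisfiability for full $\F$ --- both give the same PSPACE membership.
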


\begin{proof}
    First note that the decision problem for a \logic-reduction is PSPACE-complete. This follows from the fact that $\ForT'$ is a set of $\KDf$ formulas and the $\mathcal{M}'$s are serial and transitive relational models. Therefore the results given in~\cite{HALPERN1992319,Ladner1977TheCC,Nguyen2004OnTC} hold also for $\ForT'$ formulas and, in turn, for the \logic-reduction. Now, take the conjunction of all support formulas corresponding to the propositional variables of $\mathit{Prop}'$ that appear in $\ForT'$ and are mapped to true. To prove the theorem, we have to show that the problem of deciding those support formulas' satisfiability is within PSPACE.
    Refining a result in~\cite{Friedman_Halpern_small_model_constructions}, \cite{DimAIML} shows that the satisfiability problem for the (full) logic $\F$ is NP-complete.
    The result follows by Theorem~\ref{teo:flat-fragmentF}.
\end{proof}

Finally, we state the complexity of the model checking problem for \logic, i.e., the problem of deciding whether a formula $\alpha$ is satisfied by a state $s \in S $ of a model $\mathcal{M}$. 

\begin{theorem}
    Given a model $\mathcal{M} := \langle S, \_, \_, R, V \rangle$ and a formula $\alpha \in \ForT$, let $n$ be the number of states $s \in S$ and $r$ the number of pairs $sRv$ determined by the accessibility relation $R$. Let $k$ be the number of support formulas, $k'$ the number of belief modalities, plus the number of atomic propositions in $\alpha$, plus the number of connectives in the formula. Then, the complexity of the model checking decision problem is $O(k\cdot n^2 + (k+k')\cdot (n+r))$.
\end{theorem}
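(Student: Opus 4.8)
The plan is to use the standard global (bottom-up) model-checking algorithm: we label each state with the subformulas of $\alpha$ that hold there, processing subformulas in order of increasing structural complexity so that, when a subformula is treated, the truth sets of its immediate subformulas are already available. For each subformula $\beta$ we maintain a Boolean array indexed by states that records $\{s : \M, s \models \beta\}$, allowing constant-time membership queries. The total work is then the sum, over all subformulas, of the cost of computing its array from those of its direct subformulas.

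First I would handle the cheap cases. Atomic propositions are read directly from $V$ in $O(n)$; a Boolean connective combines the arrays of its subformulas state by state, again $O(n)$; and a belief modality $B(\beta)$ is evaluated by a single pass over the accessibility relation, namely for each state we initialize its value to true and flip it to false as soon as an $R$-successor falsifying $\beta$ is found. This costs $O(n+r)$ (one initialization pass over the $n$ states plus one pass over the $r$ edges). Thus each such subformula costs $O(n+r)$, and there are $k'$ of them by definition.

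The support case is the heart of the argument. To evaluate $\varphi \supp \psi$ we use the already-computed truth sets of the propositional subformulas $\varphi$ and $\psi$. Within each partition class $S_i$ we first isolate $||\varphi||_i$ and $||\psi||_i$; we then compute $\mathit{most}(||\varphi||_i)$ by testing, for each $s \in ||\varphi||_i$, the maximality condition $\forall v [(v \in ||\varphi||_i \land v \succeq_i s) \rightarrow s \succeq_i v]$ against every $v \in ||\varphi||_i$, which costs $O(|S_i|^2)$; finally we check the inclusion $\mathit{most}(||\varphi||_i) \subseteq ||\psi||_i$ in $O(|S_i|)$ and record the resulting truth value at every state of $S_i$. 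Since no property is assumed on the relations $\succeq_i$ (see Remark~\ref{rem:Models}), this direct pairwise computation is forced, but it suffices. Summing over the partition, the cost for a single support formula is $\sum_{i \in I} |S_i|^2 \le (\sum_{i \in I} |S_i|)^2 = n^2$, i.e. $O(n^2)$, using the disjointness of the $S_i$.

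Adding the contributions gives $O(k \cdot n^2)$ for the $k$ support subformulas, together with a baseline of $O(n+r)$ incurred by each of the $k + k'$ subformulas for reading and writing their arrays (the support cases pay this baseline too), which yields the claimed bound $O(k \cdot n^2 + (k+k')\cdot(n+r))$. The main obstacle is the support case, and specifically the bound $\sum_{i \in I} |S_i|^2 \le n^2$, which is what keeps the per-formula cost quadratic in $n$ rather than letting the naive estimate $k \cdot (\max_i |S_i|) \cdot n$ degrade into a cruder figure; the other cases are routine.
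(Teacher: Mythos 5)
Your proof is correct and follows essentially the same approach as the paper: both decompose the cost into standard Kripke model checking of the Boolean/belief skeleton at $O((k+k')\cdot(n+r))$ (the paper via its explicit \logic-reduction, you via direct bottom-up subformula labeling, which amounts to the same computation) and a quadratic pairwise-maximality computation of $\mathit{most}(||\varphi||_i)$ for each of the $k$ support formulas. Your explicit bound $\sum_{i\in I}|S_i|^2 \le n^2$ just makes precise the paper's worst-case remark ($||\varphi||_i = S_i = S$), and your evaluation order (support formulas before the modalities above them, exploiting the two-layer grammar) is a slightly cleaner presentation of the paper's two-stage split.
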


\begin{proof}
    We use the splitting methodology: first considering only the modal part of the formula $\alpha$, and afterwards the support part.
    For the first stage, apply a \logic-reduction to $\mathcal{M}$ and $\alpha$. This takes at most $k+k'$-steps (the propositional formulas are left unchanged and each support formula is substituted with a novel propositional formula). After the reduction, what is left is a model checking problem for a modal formula within a pointed Kripke relational model. As is well-known, the complexity of this problem is $O((n+r)\cdot (k+k'))$, see, e.g., \cite{Halp03}.
    For the second stage, translate back the novel propositional formulas to their respective $k$ support formulas. Take the partition $S_i$ which contains the state $s$ in which we must evaluate the formula. To evaluate a support formula $\varphi \supp \psi$, we need to compute the two sets $\mathit{most}(||\varphi||_i)$ and $||\psi||_i$. The latter is straightforward (since each state was already labeled during the first stage). The former requires at most $n^2$-steps (we are assuming the worst case in which $||\varphi||_i = S_i = S$): compare each state in $S_i$ with all other states in $S_i$, keeping track of the states that are preferred to other states. This must be done for all $k$ support formulas, thus, the complexity of the whole procedure is $O(k \cdot n^2)$. This gives us the whole complexity of the model checking problem, which is $O(k\cdot n^2 + (k+k')\cdot (n+r))$.
\end{proof}

\section{Conclusions and Future Works}
\label{sect:con}
We have introduced \logic, a logical framework for reasoning about decision trust based on two pillar concepts, namely belief and support. For the latter, we defined a novel non-monotonic conditional operator, which axiomatizes the flat fragment of the logic $\F$ and is based on preference-semantics.

Because of the generality of the concepts above and the way in which they are combined to formalize trust, \logic\ can integrate elements from the different approaches mentioned in Sect.~\ref{sect:art} within a unified framework. More precisely, we do not need to indicate \textit{specific} necessary cognitive conditions for the emergence of trust. Instead, we provide a way to get trust out of the support that exists between different formulas, which can capture the influence of different factors on trust. We also maintain a reference to cognitive features by integrating the belief modality. 


Following up the discussion initiated in  Example~\ref{ex:amazon}, we now illustrate how \logic\ can be used to combine different elements that contribute to establishing trust. 


\begin{example}
Assume that to trust $\mathit{Good}V_i$, customer $C$ seeks to fulfil three conditions: i) a cognitive-based one; ii) a reputation-based one; iii) a policy-based one.
The cognitive-based condition could be captured by the notion of occurrence Trust (denoted by formula $\mathit{OccT}V_i$) given in~\cite{10.1093/jigpal/jzp077}, which depends on multiple cognitive features of the agents involved such as the goals of $C$, the ability and intentions of $V_i$, and the effects of the actions of $V_i$ on the goals of $C$. The reputation-based condition could be represented, e.g., by the proposition $\mathit{TopRating}V_{i,j}$, while the policy-based condition by proposition $\mathit{Auth}V_i$.
Then, the formula $T_{\Gamma}(\mathit{GoodV_i})$ will indicate that customer $C$ trusts $V_i$ as a good vendor for reason $\Gamma$, where $\Gamma$ stands for $(\mathit{OccT}V_i \land \mathit{TopRating}V_{i,j} \land \mathit{Auth}V_i)$.
\end{example}

As emphasized by this example, in \logic, we have the flexibility to express several different conditions that specific models cannot capture alone. This flexibility comes, however, at the expense of reduced deductive power.

While our framework is primarily designed to capture decision trust, we claim that it could be versatile enough to encompass other notions of trust. 
In particular, as discussed in~\cite{Gambetta1988-GAMTMA}, many alternative definitions of trust in heterogeneous domains, such as reliability trust~\cite{Gambetta1988-GAMCWT} in the setting of economy, are based on the use of explicit supportive information. In addition, it is interesting to note that our approach has strong similarities with the approach followed in argumentation-based formalizations of trust~\cite{10.1080/19462166.2014.881417,VILLATA2013541}.
In the future, we intend to explore further potential applications of \logic\ and of the support operator. From the technical point of view, we plan to study the derived operator $T$ in isolation and identify its properties independently of support and belief.
Moreover, we intend to extend \logic\ by: $(i)$ allowing nesting of the operators, using beliefs within support statements; $(ii)$ providing a proof calculus, along the line of that in~\cite{CiabattoniR23}, equipped with a prover; $(iii)$ moving towards a quantitative, dynamic, and multi-agent setting.

\bibliographystyle{plain}
\bibliography{bibliography.bib}

\appendix

\section{Technical appendix: proofs of results}\label{sec:AppendixA}


\begin{proof}[Theorem~\ref{teo:Minimal_System}]
The rules {\bf RW} and {\bf LLE}, as well as the axioms {\bf AND}, {\bf CUT}, and {\bf OR} are derivable in the system for $\supp$.

Trivial for \textbf{RW} and \textbf{LLE}.

Axiom \textbf{AND}: 
\[\begin{array}{llr}
(1) & (\psi \land \chi) \rightarrow (\psi \land \chi) &
(\mathbf{CL}) \\
(2) & ((\varphi \supp \psi) \land (\varphi \supp \chi)) \rightarrow (\varphi \supp (\psi \land \chi)) &
(\mathbf{RCK}) \\
\end{array}
\]

Axiom \textbf{CUT}:
\[\begin{array}{llr}
(1) & (\varphi \supp \psi) \land ((\varphi \land \psi) \supp \chi) & \textit{Hyp.}  \\
(2) & (\varphi \supp \psi) \land (\varphi \supp (\psi \rightarrow \chi)) & (\textbf{SH}) \\
(3) & (\psi \land (\psi \rightarrow \chi)) \rightarrow \chi & (\mathbf{CL}) \\
(4) & \varphi \supp \chi &  (\mathbf{RCK}) + (2 \land 3)
\end{array}\]



Axiom \textbf{OR}: 

Starting from $(\varphi \land (\varphi \lor \chi)) \IFF \varphi$ (\textbf{CL}) and applying \textbf{LLE} we get $(\varphi \supp \psi) \IFF ((\varphi \land (\varphi \lor \chi)) \supp \psi$); similarly, we can get also
$(\chi \supp \psi) \IFF ((\chi \land (\chi \lor \varphi)) \supp \psi$). 
Now, let us assume by hypothesis $(\varphi \supp \psi) \land  (\chi \supp \psi)$:
\[\begin{array}{llr}
(1) & (\varphi \lor \chi) \supp (\varphi \rightarrow \psi) &  \textit{Hyp.} + (\mathbf{SH}) \\
(2) & (\varphi \lor \chi) \supp (\chi \rightarrow \psi)  
& \textit{Hyp.} + (\mathbf{SH}) \\
(3) & ((\varphi \rightarrow \psi) \land (\chi \rightarrow \psi)) \rightarrow ((\varphi \lor \chi) \rightarrow \psi) & (\mathbf{CL}) \\
(4) & 
(\varphi \lor \chi) \supp ((\varphi \lor \chi) \rightarrow \psi) & (\mathbf{RCK}) + (1 \land 2) \\
(5) &
(\varphi \lor \chi) \supp ((\varphi \lor \chi) \land ((\varphi \lor \chi) \rightarrow \psi)) & (\mathbf{ID}) + (\mathbf{AND}) \\
(6) & (\varphi \lor \chi) \supp \psi & (\mathbf{RW}) \\
\end{array}\]
\end{proof}

\begin{proof}[Remark~\ref{CMrejection}]
We show that \textbf{CM} in conjunction with \textbf{CUT} and the rule \textbf{RW} permits to derive \textbf{REC}:

$((\varphi \supp \psi)\land (\psi \supp\varphi))\rightarrow ((\varphi\supp\chi) \leftrightarrow  (\psi \supp\chi) )$


\[\begin{array}{llr}
(1) & ((\varphi \supp \psi) \land (\varphi \supp \chi)) \rightarrow ((\varphi \land \psi) \supp \chi) &
(\mathbf{CM}) \\
(2) & ((\psi \supp \varphi) \land ((\varphi \land \psi) \supp \chi)) \rightarrow (\psi \supp \chi) &
(\mathbf{CUT}) \\
(3) & ((\varphi \supp \psi) \land (\psi \supp \varphi) \land  (\varphi \supp \chi)) \rightarrow (\psi \supp \chi) &
(1 \land 2) \\
\end{array}
\]

$(3)$ is equivalent to $((\varphi \supp \psi) \land (\psi \supp \varphi)) \rightarrow ((\varphi \supp \chi) \rightarrow (\psi \supp \chi))$. By again using \textbf{CM} and \textbf{CUT} we can also derive $((\varphi \supp \psi) \land (\psi \supp \varphi)) \rightarrow ((\psi \supp \chi) \rightarrow (\varphi \supp \chi) )$ which in total gives as \textbf{REC}.

\end{proof}


\begin{proof}[Theorem~\ref{teo:connectionF}] All axioms and rules of $\F$ -- but $\mathbf{5}$ and $\mathbf{Abs}$ -- are derivable in the axiomatization for $\supp$. \\

The claim for axioms $\mathbf{T}$, $\mathbf{Ext}$, $\mathbf{ID}$, and $\mathbf{SH}$  directly follows from the translation. For the remaining axioms: The translation of axiom \textbf{D}$^\ast$ in terms of $\supp$ is 
$\neg (\varphi \supp \bot) \to
\neg ((\varphi \supp \psi) \land (\varphi \supp \neg \psi))$. Its
contraposition $((\varphi \supp \psi) \land (\varphi \supp \neg \psi)) \rightarrow (\varphi \supp \bot) $ is an instance of \textbf{AND}.

Case $\mathbf{K_\Box}$: 
\[\begin{array}{llr}
(1) & (\neg (\varphi \rightarrow \psi) \supp \bot) \rightarrow ( \varphi \rightarrow \psi) & (\mathbf{ST}) \\
(2) & (\neg \varphi \supp \bot) \rightarrow \varphi &  (\mathbf{ST}) \\
(3) & ((\neg (\varphi \rightarrow \psi) \supp \bot) \land (\neg \varphi \supp \bot))  \rightarrow \psi & 
(\mathbf{CL}) + (1 \land 2) \\
(4) & ((\neg (\varphi \rightarrow \psi) \supp \bot) \land (\neg \varphi \supp \bot))  \rightarrow (\neg \psi \supp \bot) &  (\mathbf{S5_F}) \\
(5) & (\neg (\varphi \rightarrow \psi) \supp \bot) \rightarrow ((\neg \varphi \supp \bot) \rightarrow (\neg \psi \supp \bot)) &  (\mathbf{CL})
\end{array}\]

Case \textbf{COK}: 
\[\begin{array}{llr}
(1) & ((\varphi \supp (\psi \rightarrow \chi)) \land (\varphi \supp \psi)) \rightarrow (\varphi \supp ((\psi \rightarrow \chi) \land \psi)) &       
(\mathbf{AND}) \\
(2) & ((\varphi \supp (\psi \rightarrow \chi)) \land (\varphi \supp \psi)) \rightarrow (\varphi \supp \chi)) & (\mathbf{RW}) \\
(3) & (\varphi \supp (\psi \rightarrow \chi)) \rightarrow ((\varphi \supp \psi) \rightarrow (\varphi \supp \chi)) & (\mathbf{CL})
\end{array}\]

Case \textbf{Nec}:
\[\begin{array}{llr}
(1) & (\neg \varphi \supp \bot) \rightarrow (\varphi \lor \neg \psi) & (\mathbf{ST}) + (\mathbf{CL}) \\      
(2) & (\neg \varphi \supp \bot) \rightarrow ((\neg \varphi \land \psi) \supp \bot) & (\mathbf{S5_F}) + (\mathbf{LL+}) \\
(3) & (\neg \varphi \supp \bot) \rightarrow  (\psi \supp \varphi) & 
(\mathbf{SH}) + (\mathbf{CL})
\end{array}\]

Case Necessitation for $\Box$, if we assume that the formula $\varphi$ is provable, then we can derive $\neg \varphi \supp \bot$ via the following:
\[\begin{array}{llr}
(1) & \neg \varphi \rightarrow \bot & \textit{Hyp.} + (\mathbf{CL}) \\
(2) & (\neg \varphi \supp \neg \varphi) \rightarrow (\neg \varphi \supp \bot) & (\mathbf{RCK}) \\
(3) & \neg \varphi \supp \neg \varphi  & (\mathbf{ID}) \\
(4) & \neg \varphi \supp \bot & (2 \land 3) \\[-2mm]
\end{array}\]

\end{proof}

\begin{proof}[Lemma~\ref{lem:truth-lemma} (Truth lemma)]
   $\mathcal{M}^{Can}, (\Delta, \pi, i) \models \alpha$ iff $\alpha \in \Delta$.

This proof proceeds by structural induction on the formula $\alpha$. 
   
If $\alpha \in \ForCL$ the claim follows directly from $\mathbf{CL}$. Assuming the induction hypothesis for every formula in $\ForCL$ lets us derive $\forall \varphi \in \ForCL$ $||\varphi||_\Gamma=[\varphi]_\Gamma$ and $\mathit{most}(||\varphi||_\Gamma)=\mathit{max}([\varphi]_\Gamma)$. 

Let $\alpha$ be of the form $\varphi \supp \psi$. We start with the case $\alpha \in \Delta$. Take $\Gamma$ with $\Delta \in [\Gamma]_\leftrightsquigarrow$. Given an arbitrary $(\Omega, \gamma, j) \in S_\Gamma$ with $(\Omega, \gamma, j) \in \mathit{max}(||\varphi||_\Gamma)$ we get $(\Omega, \gamma, j) \in \mathit{max}([\varphi]_\Gamma)$ by the induction hypothesis. Because $\psi \in \supp_\varphi(\Gamma)$ Corollary~\ref{cor:iff} implies that $\psi \in \Omega$. Again by using the induction hypothesis we conclude  $(\Omega, \gamma, j) \models \psi$ which means $(\Delta, \pi, i) \models \varphi \supp \psi$.

For the other direction, we assume $\alpha \not\in \Delta$. In this case, we have to find a maximal $\varphi$ state which does not satisfy $\psi$. The assumption $\psi \not\in \supp_\varphi(\Delta)$ lets us derive $\psi \not\in \supp_\varphi(\Gamma)$. By the use of Corollary~\ref{cor:iff}, we obtain a $(\Pi, \chi, i) \in \mathit{max}([\varphi]_\Gamma)$ with $\psi \not\in \Pi$. By induction hypothesis we get  $(\Pi, \chi, i) \in \mathit{max}(||\varphi||_\Gamma)$ and $(\Pi, \chi, i) \models \neg \psi$. This means $(\Delta, \pi, i) \not\models \varphi \supp \psi$. 

Let $\alpha$ be of the form $B(\varphi)$. Then
both directions of the claim follow directly from the induction hypothesis and the construction of $R$ in Definition~\ref{CanMod}.
\end{proof}

\end{document}